\pgfplotsset{compat=1.5}
\newtheorem{thm}{Theorem}
\newtheorem{theorem}{Theorem}[section]
\newtheorem{assumption}[thm]{Assumption}
\newtheorem{claim}[thm]{Claim}
\newtheorem{remark}[thm]{Remark}
\newenvironment{proofof}[1]{\begin{trivlist} \item {\bf Proof
#1:~~}}
  {\qed\end{trivlist}}
\newcommand{\namedref}[2]{\hyperref[#2]{#1~\ref*{#2}}}
\def \bA    {\mdef{\mathbf{A}}}
\def \bB    {\mdef{\mathbf{B}}}
\def \bM    {\mdef{\mathbf{M}}}
\def \bP    {\mdef{\mathbf{P}}}
\def \bR    {\mdef{\mathbf{R}}}
\def \bS    {\mdef{\mathbf{S}}}
\def \bT    {\mdef{\mathbf{T}}}
\def \bV    {\mdef{\mathbf{V}}}
\def \bU    {\mdef{\mathbf{U}}}
\def \bW    {\mdef{\mathbf{W}}}
\newcommand{\mdef}[1]{{\ensuremath{#1}}\xspace}  
\DeclareMathOperator*{\poly}{poly}
\newcommand{\ignore}[1]{}
\DeclareMathOperator*{\Tr}{\bm{Tr}}
\DeclarePairedDelimiterX{\inp}[2]{\langle}{\rangle}{#1, #2}
\DeclarePairedDelimiterX{\infdivx}[2]{(}{)}{%
  #1\;\delimsize\|\;#2%
}
\newcommand{\R}{\mathbb{R}}
\numberwithin{equation}{section}
\title{Hardness of Low Rank Approximation of Entrywise Transformed Matrix Products}
\author{ Tamas Sarlos\\ 
  Google Research \\ 
  \texttt{stamas@google.com} \\
   \And
  Xingyou Song\\ 
  Google Deepmind \\
  \texttt{xingyousong@google.com} \\
  \And
  David P.~Woodruff\\ 
  Carnegie Mellon University \\
  \texttt{dwoodruf@cs.cmu.edu} \\
  \AND
  Qiuyi (Richard) Zhang \\
  Google Deepmind\\
  \texttt{qiuyiz@google.com} \\}
\begin{document}

\maketitle

\begin{abstract}
Inspired by fast algorithms in natural language processing, we study low rank approximation in the entrywise transformed setting where we want to find a good rank $k$ approximation to $f(U \cdot V)$, where $U, V^\top \in \mathbb{R}^{n \times r}$ are given, $r = O(\log(n))$, and $f(x)$ is a general scalar function. Previous work in sublinear low rank approximation has shown that if both (1) $U = V^\top$ and (2) $f(x)$ is a PSD kernel function, then there is an $O(nk^{\omega-1})$ time constant relative error approximation algorithm, where $\omega \approx 2.376$ is the exponent of matrix multiplication. We give the first conditional time hardness results for this problem, demonstrating that both conditions (1) and (2) are in fact necessary for getting better than $n^{2-o(1)}$ time for a relative error low rank approximation for a wide class of functions. We give novel reductions from the Strong Exponential Time Hypothesis (SETH) that rely on lower bounding the leverage scores of flat sparse vectors and hold even when the rank of the transformed matrix $f(UV)$ and the target rank are $n^{o(1)}$, and when $U = V^\top$. Furthermore, even when $f(x) = x^p$ is a simple polynomial, we give runtime lower bounds in the case when $U \neq V^\top$ of the form $\Omega(\min(n^{2-o(1)}, \Omega(2^p)))$. Lastly, we demonstrate that our lower bounds are tight by giving an $O(n \cdot \text{poly}(k, 2^p, 1/\epsilon))$ time relative error approximation algorithm and a fast $O(n \cdot \text{poly}(k, p, 1/\epsilon))$ additive error approximation using fast tensor-based sketching. Additionally, since our low rank algorithms rely on matrix-vector product subroutines, our lower bounds extend to show that computing $f(UV)W$, for even a small matrix $W$, requires $\Omega(n^{2-o(1)})$ time.
\end{abstract}

\section{Introduction}

The central idea behind the classic problem of low rank approximation (LRA) is to approximate a given matrix with a rank $k$ matrix that preserves the important features of the original matrix, while being computationally more efficient and statistically more stable. One particular area of interest in LRA is the low rank decomposition of entrywise transformed matrices, where the entries of the original matrix are transformed through a non-linear function before being approximated, with various applications in kernel methods, self-attention, and likelihood computations \citep{choromanski2021performer, levy2014neural}. In these settings, the Gram matrix of a dot-product kernel, the attention module's product operator, or even the non-linear computation of activation in a deep network can all be represented as $f(UV)$, where the inputs are low dimensional matrices $U, V$. Note that while $UV$ is low rank, $f(UV)$ may not be, with the rank blowup dependent on the choice of the transformation $f$.

For entrywise transformed low rank approximation, we aim to find an approximately optimal rank $k$ approximation, in terms of relative Frobenius norm error, to the matrix $A = f(UV)$ given $U \in \R^{n\times r}$ and $V \in \R^{r \times d}$, where $f$ is a scalar transformation. In this paper, our goal is to study the functions $f$ such that this task is solvable in subquadratic $O(n^{2-\epsilon})$ time and for ease of presentation, we assume $n = d$ in this section. This problem was studied in the distributed setting in \citet{woodruff2016distributed} for functions $f(x) = |x|^p$, where the goal was to minimize communication. It was also studied in the streaming setting, where it was shown that $O(n \,\text{poly}(k/\epsilon))$ memory suffices to solve rank $k$ approximation with additive error $O(\epsilon\|U\|\|V\|)$ for the function $f(x) =\log(|x| + 1)$ in a single pass \citep{jiang2021single}, improving the earlier work of  \citep{liang2020sketching}. In a related paper, \citet{han2020polynomial} present an algorithm for low-rank approximation of polynomial entrywise transforms; however, they make no comparison to the optimal rank $k$ error. 

In the setting when $U = V^\top$, certain choices of $f(x)$ can surprisingly admit subqradratic relative error algorithms, such as when $f$ represents a positive semi-definite (PSD) kernel
\citep{musco2017recursive}. Moreover, for any PSD matrix of any rank, by sampling according to the leverage scores of the matrix square root of the kernel, recent work shows that relative error low rank decomposition is possible in $O(n(k/\epsilon)^{\omega -1})$ time, where $\omega \approx 2.373$ is the exponent of matrix multiplication; see \citep{musco2017sublinear, bakshi2020robust}. Note that in many cases when $U = V^\top$, carefully choosing $f$ will result in $A = f(UV)$ being PSD. In fact, when $u_i$ are all unit norm, if $f$ admits a Taylor expansion with non-negative coefficients, then $A = f(U U^\top)$ is a PSD matrix and admit sub-quadratic low rank approximations.

On the surface, this astounding algorithmic result seems to conflict with subquadratic lower bounds for basic linear algebraic primitives for many kernel functions, especially the exponential kernel used in the attention architecture. Specifically, such work looks at kernel matrix vector products and shows they take $\Omega(n^{2-o(1)})$ time for relative and additive error approximations \citep{keles2022computational}. Furthermore, these bounds can be refined to hold in the restricted regime when the entries of the matrix are $\Omega(\sqrt{\log(n)})$ \citep{alman2023fast}. These conditional lower bounds, as with others for linear algebraic problems, are derived from hardness based on the Strong Exponential Time Hypothesis (SETH) \citep{lokshtanov2013lower}. 

The crucial observation to resolve this seeming contradiction is to recognize that the low rank approximation allows for an error bound that depends on the error of the best rank-$k$ approximation, which can be much larger that the error tolerated in the previous lower bounds. Therefore, hardness for LRA is inherently different than hardness for approximating the entire matrix additively or than matrix-vector (MV) multiplication. In fact, we will show that hardness for MV multiplication is, in some sense, strictly easier to establish.

Some related lower bounds include the work of \citet{backurs2017fine} that solving kernel Support Vector Machines (SVM), ridge regression, or Principal Component Analysis (PCA) problems to high accuracy or approximating kernel density estimates up
to a constant factor for kernels with exponential tails, requires $n^{2-o(1)}$ time assuming SETH. Also, \citet{alman2020algorithms} show that for linear algebraic primitives for the Laplacian of a graph with weights given by a kernel, most operations, such as $\epsilon$-approximate matrix-vector multiplication, are hard, although their hardness results assume a $\log(1/\epsilon)$ dependence when reducing to SETH. 


\begin{table}[t]
\label{tbl:bounds}
    \centering
\scalebox{0.7}{
    \begin{tabular}{|c|c|c|c|}
    \hline
    \multicolumn{4}{|c|}{\textbf{Upper Bounds}}\\
    \hline
        Prior Work & Complexity & Matrix Type & Task \\
    
    \hline
        \cite{musco2017sublinear} & $O(n  \cdot \text{poly}(k/\epsilon))$ & PSD, $U=V^\top$ & Relative LRA  \\
    \hline
        \cite{bakshi2020robust} & $O(n \cdot (k/\epsilon)^{\omega-1})$ & PSD, $U=V^\top$  & Relative LRA  \\
    \hline
        \textbf{This work} & $O(n \cdot \text{poly}(2^p, k, 1/\epsilon))$  & $f(x)=x^p, U \neq V^\top$  & Relative LRA   \\
    \hline
        \textbf{This work} & $O(n \cdot \text{poly}(p, k, 1/\epsilon))$  & $f(x)= x^p, U \neq V^\top$  & Additive LRA   \\
    \hline
    \hline
        \multicolumn{4}{|c|}{\textbf{Lower Bounds }}\\
    \hline
        \cite{backurs2017fine} & $\Omega(n^{2-o(1)})$ & Gaussian, $U =V^\top$ & PCA, Regression\\
    \hline
        \cite{alman2020algorithms} & $\Omega(n^{2-o(1)}) $ & Kernel Laplacians $U = V^\top$ & MV product\\
    \hline
        \cite{keles2022computational} & $\Omega(n^{2-o(1)})$ & Exponential Kernels, $U = V^\top$ & MV product\\
    \hline
        \textbf{This work} & $\Omega(n^{2-o(1)})$ & $f(x) = |x|^p + O(|x|^{p+1}), U = V^\top$\textsuperscript{\bf 1} & Relative LRA, MV\\
    \hline
        \textbf{This work} & $\Omega(\min(n^{2-o(1)}, 2^{\Omega(p)}))$ & $f(x) = x^p, U \neq V^\top$ & Relative LRA, MV\\
    \hline
    
\end{tabular}}
    \vspace{0.3cm}
    \caption{Overview of upper bounds for low rank approximation (LRA) and lower bounds for relative linear algebraic primitives, where the dependence on $r$ is omitted. This table illustrates multiple separation results: 1) LRA is strictly easier than matrix vector (MV) products for positive semidefinite kernels, 2) LRA is strictly easier for functions $f(x)$ that are kernels, even when considering low-degree polynomials of $|x|$, 3) LRA is strictly easier when $U = V^\top$ for $f(x) = x^{p}$ for $p = \Omega(\log(n))$, which has a $2^{\Omega(p)}$ lower bound when $U \neq V^\top$, 4) additive error LRA is strictly easier than relative error LRA. Our work extends to any function $f(x)$ that admits a Taylor series dominated by $|x|^p$ around $x = 0$. For example, this includes $f(x) = \log(|x|+1) = |x| + O(|x|^2)$.}
\label{tab:bounds}
\vspace{-1cm}
\end{table}

\subsection{Our Contributions}

In the setting of entrywise-transformed low rank approximation, we show that we cannot get subquadratic relative error LRA generally when either 1) $U \neq V^\top$ even for PSD kernel functions or 2) for transformations that are approximately polynomials of $|x|$ even for constant degree. Therefore, without multiple strong structural assumptions on $A$, there is no subquadratic approximation algorithm, even when $r = \Theta(\log(n))$ and the rank of the transformed matrix $f(UV)$ and the target rank are $n^{o(1)}$. We emphasize these novel hardness results hold for a large class of transformations and in fact generalize to hardness for matrix vector multiplication, which can be derived as a corollary, implying that computing $f(UV)z$ also requires $\Omega(n^{2-o(1)})$ time for many $f$ that have not been studied before (see \cref{thm:lower-general}). On the positive side, for the polynomial activation $f(x) = x^p$, we provide an $O(n \cdot \text{poly}(r^p, k, 1/\epsilon))$ time algorithm for relative error LRA and $O(n \cdot \text{poly}(p, k, 1/\epsilon))$ time for additive error LRA via fast tensor-based sketches (see \cref{alg:tensorLRA}). 


We note that previous lower bound techniques do not apply for LRA since relative error LRA approximations of $f(UV)$ are possible without approximating a matrix vector product, as shown by the $O(n^{1+o(1)})$ time LRA algorithm for the popular exponential kernel $f(x) = \exp(x)$ by querying a sublinear number of entries of $f(UU^\top)$ \citep{musco2017sublinear, bakshi2020robust}. Indeed, our novel lower bounds differ from previous reductions from Orthogonal Vectors Problem (OVP) by explicitly creating LRA instances and exploiting linear algebraic structural properties of the column spaces of a low-rank tensored matrix. 

Specifically, our reduction uses an structural property that if there is a pair of input vectors to OVP which are orthogonal, then under mild assumptions, this implies we can find a {\it sparse} vector in the column span of the low rank approximation of $f(UV)$, where $U, V$ are matrices containing the input vectors to OVP. Also, our algorithm (\cref{alg:OVPreduction}) relies on another critical observation that this sparse vector has uniform magnitude in the non-zero entries, each of which  corresponds to a vector with an orthogonal vector pair. By exploiting the structure of this vector, we can lower bound the leverage score of each row that corresponds to a non-zero entry of this sparse vector, and we can appeal to fast leverage score computation algorithms to find a small $n^{o(1)}$-size superset of the support of the sparse vector. Finally, we can quickly check which pairs of entries in the superset correspond to vectors in the original OVP problem via a brute-force search, completing the reduction. The precise reduction is a bit more technically involved, as we also need to allow for additive error for our applications. 

In the setting when $U \neq V^\top$, we show novel lower bounds in the case when $f(x) = x^p$ is a polynomial kernel function and admits a simple rank $r^p$ decomposition. We show that we cannot do better than the na\"ive decomposition and prove an $\Omega(\min(n^{2-o(1)}, 2^{\Omega(p)})$ lower bound (see \cref{thm:all}). This implies in the setting when $p = \omega(\log(n))$ and $k = n^{o(1)}$, that surprisingly there is a separation between the (1) $U = V^\top$ setting and the (2) $U \neq V^\top$ setting: when $f(x) = x^{2p}$, in the first setting, previous works show that there admits an $O(nk^{\omega -1})$ approximation algorithm; however, in the second setting when $U \neq V^\top$, our lower bounds imply that there cannot exists a truly subquadratic time algorithm. We emphasize that these exponential lower bounds in the degree for the polynomial kernel are the first of their kind to rule out truly subquadratic algorithms for $p = \omega(\log(n))$, even when $f(x) = x^{p}$ is a simple polynomial. 

We also give an $O(n \cdot \text{poly}(r^p/\epsilon))$ time algorithm for relative error approximation, which agrees with our lower bounds (see \cref{thm:alg-relative}), showing that our lower bounds for the polynomial kernel when $U \neq V^\top$ are in fact tight. In addition, we provide an $O(n \cdot \text{poly}(p,k,1/\epsilon))$ time algorithm for additive error LRA that avoids the inherent exponential dependence on $p$, highlighting a separation between additive and multiplicative error LRA when $p = \Omega(\log(n))$. Specifically, we can achieve the easier additive error approximation with fast tensor-based sketching matrices (see \cref{thm:alg-additive}) by applying comparable techniques to an independent subsequent work on polynomial-based transformers, although the main difference is that they do not output a rank $k$ approximation and suffers a worse dependence on $p,1/\epsilon$ due to their non-negativity guarantees \citep{kacham2023polysketchformer}. We remark that while the polynomial tensor matrix itself can be approximated relatively in $O(n \cdot \text{poly}(p))$ runtime, we emphasize that the polynomial kernel, as a product of two tensor matrices, must incur additive error or suffer $\Omega(2^{p})$ runtime for relative error guarantees. 

Finally, we observe that we can generalize our lower bounds for LRA by exploiting the fact that our algorithms for fast LRA reduce to matrix vector product! Therefore, our relative error algorithms are reductions that give MV lower bounds, which are automatically derived for a large number of entrywise transformed matrices, even when $U=V^\top$ (see \cref{thm:lower-mv}). This implies that in some sense, LRA is an easier problem than MV approximation, implying that our lower bounds for LRA are stronger. We summarize the most relevant prior results and our contributions in Table~\ref{tbl:bounds}.

\subsection{Related Work to Transformers and Natural Language Processing}
A primary downstream application from our work is in the field of natural language, as it is common to compute similarity matrices $f(UV)$ from token embeddings $U,V$. Such is the case for the popular use of Transformers \citep{vaswani2017transformer} and their efficient attention variants \citep{tay2022efficient}. The standard attention mechanism consists of the operation $\text{softmax}(\frac{UV}{\sqrt{d}})W $, where in our notation, $U, V, W$ are the ``query", ``key", and ``value" matrices respectively. Simplifying and ignoring normalizations, this can be seen as $f(UV)W$ where $f(x) = \exp(x)$. For a given $f$, one can potentially \textit{linearize} the attention mechanism if there exist $U^\star, V^\star$ such that $f(UV) \approx U^\star V^\star$, as then the order of matrix multiplication can be rearranged into $ U^\star (V^\star W)$ which allows memory and runtime in $O(ndk)$.

In the unnormalized softmax case where $f(x^\top y) = \exp(x^\top y)$, \cite{choromanski2021performer} notes that $\exp(x^\top y) = \mathbb{E}_{\zeta \sim \mathcal{N}(0, \mathbf{I}_{d})} \left[ \exp \left( \zeta^\top x - \frac{\| x \|^{2}}{2} \right) \exp \left(\zeta^\top y - \frac{\| y \|^{2}}{2} \right) \right]$ which thus allows defining $U^\star, V^\star$ as random feature matrices from sampled $\zeta_{1}, \ldots,\zeta_{k}$. Further use of kernel properties to improve the softmax approximation have been introduced in \cite{choromanski2022hybrid,likhosherstov2022chef}. More generally, for $f$ such that $f(x^\top y) = \mathcal{K}(x, y)$ admits a kernel representation, one may consider using variants of Bochner's theorem \citep{feller1968probability} to provide approximations via random Fourier features \citep{rahimi2007random}.

While works such as \cite{tsai2019transformerkernel, kacham2023polysketchformer} have experimented with linear, polynomial, exponential, and RBF kernels, so far the dominant paradigm, termed the class of \textit{Linear Transformers} \cite{katharopoulos2020transformerrnn}, is to conveniently instead consider the reverse case, where one defines a mapping $\phi: \mathbb{R} \rightarrow \mathbb{R}$ in order to define the kernel $\mathcal{K}(x, y) = \phi(x)^\top \phi(y)$. Unfortunately, usually this does not lead to a closed-form $f$ such that $f(x^\top y) = \phi(x)^\top \phi(y)$, which can lack interpretability and compatibility with classic attention mechanisms.

However, one may consider nonlinear functions $f$ which do not admit a kernel; for example, the class of functions $f(x) = \log^{c}(|x| + 1)$ for $c > 0$ is used in \cite{levy2014neural, li2015generative} to compute implicit word embeddings and corresponding generative models. Within this application domain, our work thus provides answers to the question: \textit{for which classes of functions $f$ can one efficiently compute low-rank approximations $U^\star$, $V^\star$ such that $f(UV) \approx U^\star V^\star$, especially when $f(x^\top y)$ does \textit{not} admit a kernel structure $\mathcal{K}(x,y)$?}
\section{Preliminaries}
\label{sec:prelim}
We let our implicit matrix $A = f(UV)$ be $n \times d$ where $U \in \R^{n \times r}$ and $ V\in \R^{r \times d}$ and $n \geq d$ without loss of generality, where $f: \mathbb{R} \to \mathbb{R}$ is a scalar function. We will also assume that $d =  n^{\Omega(1)}$ and it is often the case in empirical settings that $d = n$. We say that a matrix $A$ is positive semi-definite (PSD) if it is symmetric and only has non-negative eigenvalues; a function $f$ is a kernel function if $f(UU^\top)$ is PSD for any matrix $U$. The $i$-th leverage score $\ell_i$ of matrix $B\in\mathbb{R}^{n\times d}$ is equal to its sensitivity, i.e.~$\ell_i = \sup_{x \in \mathbb{R}^d} \frac{(B_i^T x)^2}{\|Bx\|_2^2} = \sup_{y \in \text{colspan}(B)} \frac{y_i^2}{\|y\|_2^2}$, where $B_i$ is the $i$-th row of $B$. The Khatri-Rao product of $B\in\mathbb{R}^{n\times d}$ is $C\in\mathbb{R}^{n\times d^p}$, where the $i$-th row of $C$ is $B_i$ tensored with itself $p$ times. We use the standard notation that
$\tilde{O}(f)$ is $O(f \poly(\log f))$.

For the approximate rank $k$ LRA problem, we want to find $U' \in \mathbb{R}^{n \times k}$ and $V' \in \mathbb{R}^{k \times n}$ such that 
$$\|A - U' {V'}\| \leq (1+\epsilon) \min_{\tilde{U} \in \R^{n \times k}, \tilde{V} \in \R^{k \times d}} \|A - \tilde{U}\tilde{V}\|$$
where $\|\cdot\|$ denotes the Frobenius norm, unless otherwise specified and we denote $[A]_k = \tilde{U}^\star \tilde{V}^\star$ as some rank $k$ matrix that minimizes the objective. Note this approximation is a relative error approximation guarantee, but can be analogously defined for additive error. For stronger lower bounds, we consider a weakened version of this problem given by outputting only a best rank $k$ projection, specifically we want to find orthogonal $W \in \mathbb{R}^{n \times k}$: $\|A - A W W^\top\| \leq (1+\epsilon)  \|A - [A]_k\|$.


Our lower bounds will rely on reductions from the conditional hardness of OVP and Max-IP, whose hardness comes from SETH. We observe, from our remarks, that some more restrictive structure can be placed on the input vector sets $A, B$ to OVP without removing the hard instances.

\begin{assumption}\label{ass:OVP}(Hardness of Orthogonal Vectors Problem (OVP)\citep{w05}). Let $A = \{a_1, \ldots, a_n\}$ and $B = \{b_1, \ldots, b_d\}$ be sets, where $a_i, b_j \in \{0,1\}^r$ are binary vectors for all $i \in [n] = \{1, 2, \ldots, n\}, j \in [d]$. Any algorithm which given any input $(A,B)$ decides with constant probability if there is at least one pair of vectors $a \in A$ and $b \in B$ such that $a^Tb = 0$ requires $(nd)^{1-o(1)}$ time, provided $r = \omega(\log n)$ and $r= n^{o(1)}$. Observe that the lower bounds also hold when $A = B$ is enforced. 
\end{assumption}

\begin{remark}\label{rem:one} Previous work \citep{w05, vassilevska2015hardness} has shown that \cref{ass:OVP} is true for $n = d$ unless the Strong Exponential Time Hypothesis (SETH, \cite{impagliazzo2001complexity}) is false. Given this assumption, one can handle general $n$ and $d$ by a padding argument: if one could solve OVP with an algorithm $\mathcal{A}$ running in at most $(nd)^{1-C}$ time for a constant $C >0$, and without loss of generality $n \geq d$, then one could solve the problem when $|A| = |B| = n$ by splitting $B$ into $\Theta(n/d)$ disjoint sets each of size at most $d$, and solving the problem on each disjoint set in total time less than $(nd)^{1-C} \cdot \Theta(n/d) = n^{2-C}/d^C \leq n^{2-C}$, contradicting the assumption in the $|A|=|B| = n$ case. 
\end{remark}

\begin{remark}\label{rem:two}
In Assumption \ref{ass:OVP} we can enforce more restrictive structure on the input sets $A, B$ without making the problem easier. Specifically, we can assume that there are at most $n^{o(1)}$ distinct pairs with $a \in A$ and $b \in B$ for which $a^T b = 0$. Indeed, otherwise by sampling $(nd)^{2-\Omega(1)}$ pairs at random and checking if $a^T b = 0$, we would solve the OVP problem in $(nd)^{1-\Omega(1)} \cdot r  = (nd)^{1-\Omega(1)}$ time, using that $r = n^{o(1)}$. This would contradict Assumption \ref{ass:OVP}.
\end{remark}

\begin{assumption}\label{ass:OVP2}(Hardness of Apx-Max-IP$_{n,d}$ Problem, Definition 2.1, Remark 2.2, and Lemma 4.1 of \cite{cw18}). Two sets $A = \{a_1, \ldots, a_n\}$ and $B = \{b_1, \ldots, b_n\}$ are given, where $a_i, b_i \in \{0,1\}^s$ are binary vectors for all $i \in [n]$, with $s = \omega(\log(n))$. Let $m = \underset{a \in A, b \in B}{\max} a \cdot b$. Any algorithm which outputs a number $\tilde{m} \in [m/100, m]$ with constant probability requires $(nd)^{1-o(1)}$ time. 
\end{assumption}

\begin{remark}\label{rem:three}
    As in Remark \ref{rem:one} and Remark \ref{rem:two}, we can reduce the general $n$ and $d$ case to the case $n = d$ of previous work \citep{cw18}, and we can also enforce that there are at most $n^{o(1)}$ pairs $a \in A$ and $b \in B$ for which $a \cdot b \geq m/100$, as otherwise sampling would solve the problem in less than $(nd)^{1-o(1)}$ time.  
\end{remark}
\section{LRA Runtime Lower Bounds}

We show that the complexity of low rank approximation of an entrywise transformed matrix turns out to heavily depend on the type of entrywise transformation. Based on standard complexity assumptions, we show an $(nd)^{1-o(1)} = n^{1+\Omega(1)}$ time lower bound for entrywise function $f(x) = |x|^p$ for odd integers $p$ in Theorem \ref{thm:odd} below, while we show a weaker $2^{\Omega(p)}$ lower bound for any integer in Theorem \ref{thm:all}, which becomes $(nd)^{1-o(1)}$ for $p = \Theta(\log n)$. This is no accident, as we show a matching $2^{O(p)}n^{1+o(1)}$ upper bound in Theorem~\ref{thm:alg-relative}. Furthermore, our lower bounds extend to functions that are approximately odd-degree polynomials of $|x|$, even if the degree is small. Specifically, this includes the commonly used $f(x) = \log(|x|+1) \approx |x|$.

Our proofs use variations of the Orthogonal Vectors Problem (OVP) to create two types of instances of implicit low rank approximation to create a fast $O(n^{1+o(1)})$ time algorithm to solve OVP (see~\cref{alg:OVPreduction}) when given access to a low rank approximation algorithm with constant relative error and small constant additive error guarantees. Let $A, B$ be the set of input vectors of OVP. Then, in our reduction, if there is a pair of input vectors which are orthogonal, then applying $\texttt{LRA}$, we can find either 1) a column span deviation of the low rank approximation or 2) a {\it sparse} vector in the column span of the low rank approximation. Note that we do not look at the exact column spans, but allow an additive error depending on the additive error of the output. 

In the latter case, we note that our reduction structure ensures that all entries in this vector have the same magnitude. Consequently, since the column span of the output has low rank, each such entry in the support of this sparse vector corresponds to a large {\it leverage score}, and so to find the support of this unknown sparse column we can use algorithms to compute or approximate all leverage scores given the low rank factorization of the entrywise transformed matrix. This enables us to find a small superset of the support of the sparse vector, and then brute-force which pairs of entries in the superset correspond to vectors in the original OVP problem that intersect. The overall time in the reduction is negligible compared to the time to solve OVP, and therefore it must be that finding the factorization of the entrywise transformed matrix itself was expensive. 

In the former case, it follows that the {\it column spans} of the output are necessarily far from each other in the two cases, and we can quickly check this and therefore solve the OVP problem. Together, it follows that the time for finding the factorization of the entrywise transformed matrix itself must have been large. While there are quantitative differences in the two cases of odd and even integers $p$, the proofs both follow this strategy.

\begin{algorithm}[!htb]\caption{OVP to LRA Reduction}
\label{alg:OVPreduction}
\begin{algorithmic}[1]
\Require{Sets $\bA = \{a_1, \ldots,a_n\in\{0,1\}^s\}$ and $\bB = \{b_1, \ldots,b_n \in \{0,1\}^s\}$}
\Ensure{YES if there exists $a^\top b = 0$ and NO otherwise.}

 
\State{Choose $c \in \{-1,1\}^n$ uniformly at random}
\State{Let $\bU = [\bA \,| \, c] \in \R^{n \times (s+1)}$ and $\bV^\top = [\bB \,| \, c] \in \R^{n \times (s+1)}$ and $r = s+1$.}

\State{Let $\bW = \texttt{LRA}(\bU, \bV, f=|x|^p)$ for constant relative error and $\alpha$ additive error}

\State{Compute $\bU'' = \bU \otimes ... \otimes \bU \in \R^{n \times r^p}$, $\bU$ tensored with itself $p$ times and similarly for $\bV''$. }
\State{Let $r_i^2 = \|\bU''\bV'' e_i\|_2^2 - \|\bU''\bV'' \bW \bW^\top e_i\|_2^2$ for $i$-th column}
\Comment{Calculate column distances of $\bU''\bV''$ to column span of $\bW$.}

\State{If any of $r_i^2 > 1.01\alpha$, output YES}
\Comment{$\alpha$ is any upper bound on additive error of $\texttt{LRA}$}
\State{Compute $1/2$-approximate leverage scores $\ell_i$ of $\bU''$ and let $S = \{i \in [n] | \ell_i \geq 1/(100n^{o(1)})\}$} 
\Comment{Finds the representative subset and the threshold is given by the OVP assumption.}
\State{Compute all dot products between $a_i$ and $\bB$ for all $i \in S$. Output YES if there exists a pair such that $a_i^\top b_j = 0$ and NO otherwise}
\end{algorithmic}
\end{algorithm}

\subsection{General Functions of $|x|^p$ for odd integers $p$}

In the following theorem we will also allow for a tiny amount of additive error, as this will be useful for our later lower bound applications where we approximate other functions using a Taylor series and reduce from the following theorem. 

\begin{theorem}\label{thm:odd}
Suppose $n\in\mathbb{N}$, $f(x) = |x|^p$ for an odd integer constant $p = O(\log(n))$, and $r = O(\log(n))$.
There is a positive integer $k = n^{o(1)}$, such that for any approximation factor $\Delta \geq 1$ and any constant $\alpha < 2$, any possibly randomized, algorithm which given any input $U \in \mathbb{R}^{n \times r}$ and
$V \in \mathbb{R}^{r \times d}$ outputs $W \in \mathbb{R}^{n \times k}$ satisfying
\begin{eqnarray*}
\|f(U \cdot V) WW^\top - f(U \cdot V)\|_F^2 \leq \Delta \cdot \|[f(U \cdot V)]_k - f(U \cdot V)\|_F^2 + \alpha,
\end{eqnarray*}
with constant probability, requires $(nd)^{1-o(1)}$ time, under \cref{ass:OVP}. Further, this holds even if $U = V^T$. Here $[f(U \cdot V)]_k$ denotes the best rank-$k$ approximation to $f(U \cdot V)$ in the Frobenius norm.
\end{theorem}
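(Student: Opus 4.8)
The plan is to reduce from OVP (\cref{ass:OVP}) using the construction in \cref{alg:OVPreduction} with $U = V^\top$. Given the OVP instance $(\bA,\bB)$ with $\bA = \bB$ (the hardness survives this restriction, and also the restriction in \cref{rem:two} that at most $n^{o(1)}$ orthogonal pairs exist), I would form $\bU = [\bA \mid c]$ with $c \in \{-1,1\}^n$ a uniformly random sign column, so that $U = V^\top$. The key structural fact to establish is the following dichotomy about $A := f(\bU\bU^\top) = |\bU\bU^\top|^{\odot p}$: since $\bU_i \cdot \bU_j = a_i \cdot a_j + c_i c_j$, each entry of $\bU\bU^\top$ is $(a_i\cdot a_j) \pm 1$; the $\pm 1$ depends on whether $c_i = c_j$. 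When $a_i \cdot a_j = 0$ the entry is $\pm 1$, so $|A_{ij}| = 1$; when $a_i \cdot a_j \geq 1$ the entry is at least $0$ in absolute value but, crucially, I want to arrange (possibly by scaling $\bA$ by a large constant, or padding) that non-orthogonal pairs contribute entries of magnitude that is either very large or exactly predictable. The cleanest route: scale so that $a_i \cdot a_j \in \{0\} \cup \{\text{large}\}$ is not available directly, so instead I exploit the randomness of $c$: over the choice of $c$, the diagonal is $|{\|a_i\|^2 + 1}|^p$ which is fixed, and off-diagonal entries for orthogonal pairs are $1$ in magnitude while for non-orthogonal pairs they are $(a_i\cdot a_j \pm 1)^p \geq 2^p$ if we scale $a_i \cdot a_j$ up, or we keep the problem as is and note $a_i \cdot a_j \geq 1$ gives magnitude $\geq 0^p$... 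This is the technically delicate part and I expect the actual construction to rescale the $0/1$ vectors to $0/M$ vectors for a large constant $M$, so that a non-orthogonal pair yields an entry of magnitude $\geq (M^2-1)^p$ while an orthogonal pair yields magnitude exactly $1$.

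Given this, the core claim is: if no orthogonal pair exists, then $A$ is ``close to a rank-$n^{o(1)}$ matrix'' in the precise sense that, after removing the $\bigO(n^{o(1)})$ large entries (coming from non-orthogonal pairs, which are few by \cref{rem:two}), the remainder has small Frobenius norm, so $\|A - [A]_k\|_F^2$ is tiny (at most the additive budget $\alpha$), forcing $WW^\top$ to capture essentially all of $A$; whereas if an orthogonal pair $(a_i,a_j)$ does exist, then $A$ contains a submatrix structure — specifically, by averaging over the sign column $c$, the column $A e_j$ restricted to the orthogonal-partners-of-$j$ has all entries of equal magnitude $1$ and these rows have nontrivial mass, so the best rank-$k$ projection error $\|A - AWW^\top\|_F^2$ relative to $\|A - [A]_k\|_F^2$ is either large (case 1, detected in Line 6 of the algorithm by some $r_i^2 > 1.01\alpha$) or else $W$'s column span must nearly contain a sparse vector supported on the orthogonal partners (case 2). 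In case 2, because that sparse vector lies (approximately) in the column span of $\bU'' = \bU^{\otimes p}$, which has rank $r^p = n^{o(1)}$, and because its nonzero entries are all of equal magnitude, each such coordinate has leverage score $\geq 1/(r^p \cdot \text{support size}) \geq 1/n^{o(1)}$; hence computing $\tfrac12$-approximate leverage scores of $\bU''$ (in $n^{1+o(1)}$ time via standard fast leverage-score algorithms, since $\bU''$ has $n^{o(1)}$ columns) isolates a set $S$ of size $n^{o(1)}$ containing all endpoints of orthogonal pairs, after which brute-forcing dot products $a_i \cdot b_j$ for $i \in S$ finishes in $|S| \cdot n \cdot r = n^{1+o(1)}$ time.

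The correctness argument then assembles as: run \texttt{LRA} once; if Line 6 fires, output YES; otherwise the sparse-vector structure is guaranteed, so Line 8 correctly decides. No orthogonal pair $\Rightarrow$ $A$ is nearly low rank $\Rightarrow$ $r_i^2 \leq \alpha < 1.01\alpha$ for all $i$ and no orthogonal pair is found in $S$, so output NO; some orthogonal pair $\Rightarrow$ either Line 6 fires (column-span deviation) or the leverage-score step catches it. Since every step other than the \texttt{LRA} call runs in $n^{1+o(1)}$ time, an \texttt{LRA} algorithm running in $(nd)^{1-\Omega(1)}$ time would solve OVP in $(nd)^{1-\Omega(1)}$ time, contradicting \cref{ass:OVP}; this proves the $(nd)^{1-o(1)}$ lower bound, and it holds for every $\Delta \geq 1$ and every constant $\alpha < 2$ because the gap between the ``no orthogonal pair'' case ($\|A-[A]_k\|_F^2$ essentially $0$, equal-magnitude-$1$ entries absent) and the ``orthogonal pair'' case (a unit-magnitude sparse column that cannot be $(1+\epsilon)$-approximately projected away within additive slack $< 2$) is a fixed constant strictly exceeding $\alpha$.

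The main obstacle I anticipate is making the dichotomy quantitatively airtight: I must (i) choose the rescaling $M$ and verify that under ``no orthogonal pair'' the matrix $A$ genuinely has best rank-$k$ error below the additive budget — this uses \cref{rem:two} to bound the number of large entries and hence the rank contribution — and (ii) in the ``orthogonal pair'' case, use the random sign column $c$ to argue (via a second-moment / averaging argument over $c$) that the relevant sparse column is not itself already low rank in a way that lets $WW^\top$ absorb it cheaply, i.e., that the unit-magnitude entries on the orthogonal partners force either a detectable column-span deviation or a genuine heavy leverage score. Carefully tracking how the additive error $\alpha$ of \texttt{LRA} propagates into the $1.01\alpha$ threshold and into the leverage-score lower bound is where the bookkeeping concentrates.
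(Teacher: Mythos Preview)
Your overall architecture mirrors the paper's: reduce from OVP via \cref{alg:OVPreduction}, use the random sign column, check column--span deviation, then use leverage scores of $\bU''$ to localize the orthogonal pairs. But you miss the single structural observation that makes the construction work, and this causes the ``core claim'' paragraph to go off the rails.

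The point you compute but fail to use is this: for odd $p$ one has $|x|^p = x^p$ if and only if $x \ge 0$. You correctly note that $U_i \cdot U_j = a_i\!\cdot\! a_j + c_ic_j$, and that when $a_i\!\cdot\! a_j \ge 1$ this is $\ge 0$. The immediate consequence is that at every \emph{non}-orthogonal pair, $f(UV)_{ij} = |U_iV_j|^p = (U_iV_j)^p$, so $f(UV)$ agrees \emph{exactly} with the rank-$r^p$ tensor matrix $U''V''$ there. At orthogonal pairs with $c_ic_j = -1$ (probability $1/2$), the entry is $-1$, so $|{-1}|^p - (-1)^p = 2$; thus $f(UV) = U''V'' + E$ where $E$ is supported on at most $n^{o(1)}$ entries (by \cref{rem:two}) each equal to $2$. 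This gives $\mathrm{rank}(f(UV)) \le r^p + n^{o(1)} =: k$ in \emph{both} cases, hence $\|[f(UV)]_k - f(UV)\|_F^2 = 0$, which is what makes the multiplicative factor $\Delta$ disappear and leaves only the additive $\alpha$. No scaling is needed, and nothing about ``magnitude gap'' is relevant.

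Your ``core claim'' inverts this: you write that in the no-orthogonal-pair case the matrix becomes low rank ``after removing the $O(n^{o(1)})$ large entries (coming from non-orthogonal pairs, which are few by \cref{rem:two})''. But \cref{rem:two} bounds the number of \emph{orthogonal} pairs, not non-orthogonal ones; in Case~1 \emph{every} pair is non-orthogonal, and the matrix is low rank because it equals $U''V''$ exactly, not because it is sparse. Without the correct reason for $\mathrm{rank}(f(UV)) \le k$, you cannot conclude that the LRA error is $\le \alpha$, and the rest of the argument (column-span check with threshold near $\alpha$, the value-$2$ sparse difference vector $v = f(UV)e_j - U''V''e_j$, and the leverage-score localization) does not get off the ground. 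Once you plug in the sign observation, the paper's argument proceeds essentially as you sketch in your later paragraphs, with the sparse vector having entries of value $2$ (not $1$), which is why the constraint $\alpha < 2$ appears.
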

\begin{proof}
Suppose $A$ and $B$ are input sets to the OVP problem of Assumption \ref{ass:OVP} with parameter $r = s +1 $, where $s$ is the dimension of the points. We let the rows of $U$ be the points in $A$, but we append one additional dimension, represented as column vector $c$ to $U$ that is chosen uniformly at random in $\{-1,1\}^n$. Similarly, the columns of $V$ correspond to the points in $B$, but we append one additional row to $V$, which is equal to $c^T$. Observe that if $A = B$, then necessarily $U = V^\top$. 

Let Case 1 be when there is no $a \in A$ and $b \in B$ with $a^T b = 0$, and Case 2 be when there is an $a \in A$ and a $b \in B$ with $a^T b = 0$, and we are deciding whether we are in Case 2. We claim that our algorithm (\cref{alg:OVPreduction}) will always output NO when we are in Case 1 and will output YES with constant probability when we are in Case 2. This clearly suffices to solve OVP with constant probability.

Notice that if there is no $a \in A$ and $b \in B$ for which $a^Tb = 0$, then for all $a \in A$ and $b \in B$, we have $a^Tb \geq 1$. Consequently with probability $1$ over the choice of column vector $c$, for all $i \in [n]$ and $j \in [d]$, $U_i V_j \geq 0$, where $U_i$ is the $i$-th row of $U$ and $V_j$ is the $j$-th column of $V$. Consequently, $f(U_i V_j) = (U_i V_j)^p$. Thus, $f(U \cdot V) = U'' V''$, where each row of $U'' \in \mathbb{R}^{n \times r^p}$ is the Khatri-Rao product of itself $p$ times, and each column of $V'' \in \mathbb{R}^{r^p \times d}$ is the Khatri-Rao product of itself $p$ times. 

Observe that the rank of $f(U \cdot V)$ is at most rank$(U'') + n^{o(1)} \leq (s+1)^p + n^{o(1)}$ since by Remark \ref{rem:two} we have at most $n^{o(1)}$ dot product pairs that are zero, implying that $f(U \cdot V)$ can be obtained from $U'' \cdot V''$ by changing at most $n^{o(1)}$ entries. It follows that there is a value $k \leq n^{o(1)} + (s+1)^p$ such that, in both cases, for any multiplicative approximation factor $\Delta \geq 1$, necessarily the output $W$ satisfies
$\|f(U\cdot V) WW^\top - f(U \cdot V)\|_F^2 \leq \alpha$ with constant probability. 

Next, we compare the column span of $W$ to that of $U''\cdot V''$. In Case 1, there is no pairwise dot product that is zero, so we have that $f(U \cdot V) = U'' V''$.
Therefore, each column of $U'' \cdot V''$ has squared distance at most $\alpha$ to the column span of $W$. 


Note that we can compute all squared distances of $U''V''$ to the column span of $W$ in $O(n^{1+o(1)})$ time. This is so that we can detect whether we are in case 1, where $U''V''$ has at squared distance at most $\alpha$ to the column span of $W$, or we are in case 2. Specifically, we compute the squared distance by the Pythagorean theorem and since $W$ is orthogonal:

$$r_i^2 = \|U'' \cdot V'' e_i\|_2^2 - \|U''V''WW^\top e_i\|_2^2$$

 and for each $i$, we can compute this in $n^{o(1)}$ time, given the matrices we have precomputed in the previous paragraph. 
 Thus, we can compute all squared distances up to additive $1/\poly(n)$ in $n^{1+o(1)}$ time. If we see that any squared distance is larger than $\alpha + 1/\poly(n)$, we know we are in Case 2. 

%

Therefore, we may assume in what follows that each column of $U'' \cdot V''$ has squared distance at most $\alpha + 1/\poly(n)$ from the column span of $W$. 
In this case we can also determine in $n^{1+o(1)}$ time, with constant probability, whether we are in Case 2. To do so, we apply a {\it leverage score} approximation algorithm \citep{cw13} to find a small representative subset $A' \subset A$ such that $|A'| = n^{o(1)}$ and if there exists $a \in A, b \in B$ such that $a^\top b = 0$, then $a \in A'$. Specifically, suppose there is an $a \in A$ and a $b \in B$ for which $a^T b = 0$, and suppose $U_i$ extends $a$ by one coordinate and $V_j$ extends $b$ by one coordinate. Then with probability at least $1/2$, $U_i^T V_j = -1$, and so $f(U_i^T V_j) = 1$. Let us condition on this event. Then we have the following claim in this case. 

\begin{claim} 
Let $a \in A$ and $b \in B$ be such that $a^T b = 0$, and the corresponding row $U_i$ and column $V_j$ satisfy $U_i^\top V_j = -1$. Also, suppose each column of $U' \cdot V'$ has squared distance at most $\alpha + 1/\poly(n)$ from the column span of $U''$. Then, in $n^{1+o(1)}$ time, we can find a representative subset $A' \subset A$ of size $n^{o(1)}$ such that $a \in A'$ with high probability. 
\end{claim}

\begin{proof}(of Claim)
First note that by Remark \ref{rem:two}, there can be at most $n^{o(1)}$ coordinates in the $j$-th column of $f(U \cdot V)$ which differ from their corresponding coordinate value in the $j$-th column of $U'' V''$, since each such difference corresponds to a pair of points $a \in A$ and $b \in B$ in the OVP problem for which $a^Tb = 0$. Hence, since $f(U \cdot V)$ has squared distance at most $\alpha + 1/\poly(n)$ from the column span of $W$, then consider the following vectors by taking the difference of the $j$-th column of $f(U \cdot V)$ and the $j$-th column of $U'' \cdot V''$, we would have vectors $v$ denoting the difference vector and $e$ being the residual vector of $v$ projected on the column span of $U''$, with the following properties:
\vspace{-3.5mm}
\begin{itemize}
    \item $v$ contains at most $n^{o(1)}$ values, each of value equal to $2$,
\vspace{-1.5mm}
    \item $\|e\| \leq \alpha + 1/\poly(n)$
\vspace{-1.5mm}
    \item $v+e$ is in the column span of $U''$. 
\end{itemize}
\vspace{-3.5mm}

Let $S \subset [n]$ be the set of of indices $i$ 
    for which $\frac{(v+e)_i^2}{\|v+e\|_2^2} \geq 1/n^{o(1)}$. Since $\alpha + 1/\poly(n)$ is at most a 
    constant strictly less than $2$, it follows that for each $i$ in the support of $v$, since $v_i = 2$, we have that $\frac{(v+e)_i^2}{\|v+e\|_2^2} \geq 1/n^{o(1)}$. Since also $\|v+e\|_2^2 \leq n^{o(1)}$, the support of $v$ is included in $S$ and $|S| \leq  n^{o(1)}$.

Recall the definition of leverage scores $\ell_i$ from~\cref{sec:prelim}
and that the sum of the $n$ leverage scores of $B\in\mathbb{R}^{n\times t}$ is exactly equal to the rank of $B$, which is at most $t$. Since $v+e$ is in the column span of $U''$, the $i$-th leverage score of $U''$ satisfies $\ell_i(U'') \geq (v+e)_i^2/\|v+e\|_2^2 = \Omega(1/n^{o(1)})$.

Consequently, each coordinate $i$ in $S$ satisfies $\ell_i(U'')= \Omega(1/n^{o(1)})$. It is known \citep{cw13} how, given a matrix $B$, in $O(n t \log n) + t^{O(1)}$ time, one can compute a list $\ell_1', \ldots, \ell_n'$ with $\ell_i' = \Theta(\ell_i)$ for all $i$ with probability $1-1/n^{100}$. Consequently, using that rank$(U'') \leq r^p$, given $U''$ one can find a superset $T$ containing $S$ for which $|T| = O(r^p n^{o(1)})$ in $n^{1+o(1)}$ time, where recall $r = s+1$ and $s = n^{o(1)}$ and $p$ is constant. Note that our bound on $|T|$ follows since we just need to keep the leverage scores that are $\Omega(1/n^{o(1)})$ and the sum of all leverage scores is at most $r^p = n^{o(1)}$, recalling $r = s+1$ and $s = n^{o(1)}$.
\end{proof} 

Continuing the proof of \cref{thm:odd}, by our claim, when we are in Case 2, with at least constant probability, we find in time $O(n^{1+o(1)})$ a representative subset $A'$ of size $O(n^{o(1)})$.  Given our subset $A'$, one can then compute all pairs of dot products between the points in $A'$ and the points in $B$ in $O(n^{o(1)} d s)$ time, and since we may assume $n \geq d$ without loss of generality, we can compute all such pairs in $n^{1+o(1)}$ time. Lastly, if no such $a, b$ exist, when we are in Case 1, this algorithm cannot err. Therefore, we have an algorithm that can solve $OVP$, which then violates Assumption \ref{ass:OVP}, using that $(nd)^{1-o(1)} = n^{1+\Omega(1)}$. 
\end{proof}

Our lower bound techniques also apply to a number of important function $f$ that do not have the form of $|x|^p$ for an integer $p$. We show that our lower bounds for LRA in fact holds for any function $g(x) = f(|x|)$, where $f(x)$ is a function that admits a Taylor expansion with a dominant term of $x^p$, for odd $p$, around $0$. Of particular interest in natural language processing \citep{liang2020sketching,jiang2021single} is the function $g(x) = \log(1+|x|)$, where satisfies $g(x) = f(|x|)$, where $f(x) = x + O(|x|^2)$ and thus we can appeal to our lower bound with $p = 1$.

\begin{theorem}
\label{thm:lower-general}
Suppose $n\in\mathbb{N}$, $r = O(\log(n))$, and $g(x) = f(|x|)$, where $f$ admits a Taylor expansion $f(x) = c_p x^p + O(|x|^{p+1})$ for odd $p=O(\log(n))$. There is a positive integer $k = n^{o(1)}$, such that for any approximation factor $\Delta \geq 1$, any, possibly randomized, algorithm which given inputs $U \in \mathbb{R}^{n \times r}$ and
$V \in \mathbb{R}^{r \times d}$ outputs $W  \in \mathbb{R}^{n \times k}$ satisfying
\begin{eqnarray*}
\|g(U \cdot V) W W^\top - g(U \cdot V)\|_F^2 \leq \Delta \cdot \|[g(U \cdot V)]_k - g(U \cdot V)\|_F^2
\end{eqnarray*}
with constant probability, requires $(nd)^{1-o(1)}$ time, under Assumption \ref{ass:OVP}. Further, this holds even if $U = V^\top$. 
\end{theorem}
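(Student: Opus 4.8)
The plan is to reduce from \cref{ass:OVP}, reusing the reduction template behind \cref{thm:odd} and \cref{alg:OVPreduction} essentially unchanged, with one new ingredient: a \emph{rescaling} of the inputs by a tiny factor $\gamma=1/\poly(n)$ so that the Taylor expansion $f(x)=c_px^p+O(|x|^{p+1})$ forces the higher-order part of $g(x)=f(|x|)$ to be negligible. The conceptual point that makes the explicit additive slack of \cref{thm:odd} unnecessary here is that, after rescaling, the best rank-$k$ approximation error of $g(\gamma UV)$ is no longer zero --- it is essentially the Frobenius norm of the Taylor remainder --- so the relative-error term $\Delta\cdot\|[g(\gamma UV)]_k-g(\gamma UV)\|_F^2$ itself plays the role of the $\alpha$ in \cref{thm:odd}.

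Concretely, I would start from an OVP instance $(A,B)$ with $a_i,b_j\in\{0,1\}^s$ having at most $n^{o(1)}$ orthogonal pairs (\cref{rem:two}), taking $A=B$ for the symmetric case, and form $U=[A\mid c]$, $V^\top=[B\mid c]$ with $c\in\{-1,1\}^n$ uniform, exactly as in the proof of \cref{thm:odd}; thus $r=s+1=n^{o(1)}$, every entry $(UV)_{ij}=a_i^\top b_j+c_ic_j$ is an integer in $\{-1,0,\dots,s+1\}$, and $(UV)_{ij}=-1$ precisely when $a_i^\top b_j=0$ and $c_ic_j=-1$. Fix the target rank $k:=(s+1)^p+n^{o(1)}$ (the $n^{o(1)}$ being the orthogonal-pair bound of \cref{rem:two}), which is $n^{o(1)}$ in the stated regime for $p,r$, and choose $\gamma:=\Theta\big(|c_p|\,/\,((2+\sqrt\Delta)\,C\,(s+1)^{p+1}\sqrt{nd})\big)=1/\poly(n)$, small enough that $\gamma(s+1)$ lies within the radius on which $|f(x)-c_px^p|\le C|x|^{p+1}$ holds. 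I would then run the hypothesized subquadratic $g$-LRA algorithm $\mathcal{A}$ in place of the \texttt{LRA} call in \cref{alg:OVPreduction}, on the pair $\gamma^{1/2}U,\ \gamma^{1/2}V$ (which still satisfies $\gamma^{1/2}U=(\gamma^{1/2}V)^\top$ in the symmetric case, and whose product entrywise yields $g(\gamma UV)$), with target rank $k$, obtaining $W\in\mathbb{R}^{n\times k}$ with orthonormal columns satisfying $\|g(\gamma UV)WW^\top-g(\gamma UV)\|_F^2\le\Delta\,\|[g(\gamma UV)]_k-g(\gamma UV)\|_F^2$.

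The analysis is three short steps. (i) Let $N:=c_p\gamma^p\,|UV|^{\circ p}$, the matrix with entries $c_p\gamma^p|(UV)_{ij}|^p$; since $|(UV)_{ij}|\le s+1$, the Taylor bound gives $\|g(\gamma UV)-N\|_F\le\beta:=C\gamma^{p+1}(s+1)^{p+1}\sqrt{nd}$. (ii) As in \cref{thm:odd}, $|UV|^{\circ p}=U''V''+\Delta_{\mathrm{sparse}}$, where $U'',V''$ are the $p$-fold row-wise tensor powers of $U,V$ and $\Delta_{\mathrm{sparse}}$ is supported on the $\le n^{o(1)}$ orthogonal pairs that also get $c_ic_j=-1$, with value exactly $2$ there (as $p$ is odd); hence $\mathrm{rank}(N)\le(s+1)^p+n^{o(1)}=k$, so $\|[g(\gamma UV)]_k-g(\gamma UV)\|_F\le\beta$, and $\mathcal{A}$'s guarantee yields $\|g(\gamma UV)WW^\top-g(\gamma UV)\|_F^2\le\Delta\beta^2$. (iii) A triangle inequality transfers this to $N$ (at cost $(2+\sqrt\Delta)\beta$) and then, dividing by $|c_p|\gamma^p$, to $|UV|^{\circ p}$: $\big\||UV|^{\circ p}WW^\top-|UV|^{\circ p}\big\|_F^2\le(2+\sqrt\Delta)^2C^2(s+1)^{2p+2}\,nd\,\gamma^2/c_p^2$, which by the choice of $\gamma$ is at most any prescribed constant $\alpha<2$, say $\alpha=1$. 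Since $f(x)=|x|^p$ gives $f(UV)=|UV|^{\circ p}$, we have landed at exactly the midpoint of the proof of \cref{thm:odd} (an orthonormal $W$ with $\|f(UV)WW^\top-f(UV)\|_F^2\le\alpha$), and the remaining steps are imported verbatim: with probability $\ge 1/2$ over $c$ a fixed orthogonal pair survives as a nonzero column of $\Delta_{\mathrm{sparse}}$; Steps 5--6 of \cref{alg:OVPreduction} detect Case~2 whenever some column of $U''V''$ is farther than $1.01\alpha$ from $\mathrm{colspan}(W)$; and otherwise the leverage-score subroutine on $U''$ (Step~7) yields an $n^{o(1)}$-size superset of $\mathrm{supp}(\Delta_{\mathrm{sparse}})$, after which brute force over surviving pairs (Step~8) decides OVP --- all in $n^{1+o(1)}$ additional time and with constant success probability. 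If $\mathcal{A}$ ran in $(nd)^{1-\Omega(1)}$ time, the whole reduction would too, contradicting \cref{ass:OVP}.

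The main obstacle --- indeed the only place the argument departs from \cref{thm:odd} --- is step (iii): one must certify that the best rank-$k$ error of $g(\gamma UV)$, which (unlike the pure $|x|^p$ case, where it is exactly zero) is strictly positive, is nonetheless small enough that $\Delta$ times it, once rescaled back to the scale of $|UV|^{\circ p}$, remains a constant strictly below $2$, since the leverage-score step's threshold argument requires this. This works because rescaling the inputs by $\gamma$ multiplies the ``signal'' $|UV|^{\circ p}$ by $\gamma^p$ but the Taylor remainder only by $\gamma^{p+1}$, giving a remainder-to-signal ratio $O(\gamma)=1/\poly(n)$; everything downstream is unchanged. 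The remaining bookkeeping points are minor: that $k=(s+1)^p+n^{o(1)}$ is genuinely $n^{o(1)}$ in the hypothesized range of $p$ and $r$, and that scaling both factors by $\gamma^{1/2}$ preserves $U=V^\top$, so the symmetric lower bound survives.
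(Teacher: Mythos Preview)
Your proposal is correct and takes essentially the same approach as the paper: both rescale the inputs by a $1/\poly(n)$ factor so that the Taylor remainder $O(|x|^{p+1})$ becomes negligible relative to the leading term $c_p|x|^p$, apply the hypothesized $g$-LRA algorithm to the scaled instance, and transfer the resulting guarantee back to an $|x|^p$-LRA guarantee with tiny additive error, at which point \cref{thm:odd} applies. The only organizational difference is that the paper phrases the transfer via an abstract rank-$k$ output $\tilde U'\tilde V'$ (scaled to $U'V'$) and then invokes \cref{thm:odd} as a black box, whereas you keep the projection form $WW^\top$ throughout and explicitly re-enter \cref{alg:OVPreduction} at the point after the \texttt{LRA} call; your version is arguably cleaner since it matches the output format in the theorem statement and makes transparent the key observation that the nonzero best-rank-$k$ error of $g(\gamma UV)$ itself supplies the additive slack $\alpha$.
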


\subsection{Polynomials of All Integers $p$}
We now give the proof for $p$-degree polynomials for all integers $p$, which is weaker than Theorem \ref{thm:odd} for odd integers, but this is the first such lower bound for even integers. The weaker bound cannot be substantially improved since in this setting $f(x) = x^p$, so $f(UV)$ is in fact a kernel and matrix vector multiplication can be performed in $O(nr^p)$ time. For intuition why our lower bound techniques do not extend, recall that our previous reduction forces the absolute value operation to essentially alter entries of $(UV)^p$ but only at entries of the original with zero dot product. Therefore, we can write as a sum of a low rank matrix (from tensor product) and a sparse matrix, whose sparse entries now represent the OVP pairs. However, when $p$ is even, the absolute value operation leaves the entries unchanged and does not induce the additional sparse matrix. Therefore, our lower bounds for this setting rely on a slightly different variant of OVP, specifically quadratic lower bounds for finding the maximum dot product (\cref{ass:OVP2}). 

Intuitively, our new reduction is as follows: Let $\texttt{OPT}$ be the maximum inner product and by assumption, consider this small set of large inner product pairs, which represents a small number of entries in that are large in magnitude so that when you apply a threshold at $\texttt{OPT}/100$, the resulting matrix is sparse. Since $f(x) = x^p$ amplifies the magnitude differences, it follows that $(UV)^p$ is much closer relatively to an approximately sparse, and therefore low rank, matrix. Therefore an approximate low rank approximation (LRA) algorithm can recover this sparse low rank matrix well enough so that the span of the approximate matrix can be used, via leverage score computations, to solve the APX-Max-IP problem. 

\begin{theorem}\label{thm:all}
Let $U \in \mathbb{R}^{n \times r}$ and
$V \in \mathbb{R}^{r \times d}$ be given with $r = O(\log(n))$.
Suppose $f(x) = x^p$ for an integer $p \geq 1$. 
There is a positive integer $k = n^{o(1)}$, such that for any approximation factor $\Delta \geq 1$, any, possibly randomized, algorithm which outputs $W \in \mathbb{R}^{n \times k}$ satisfying
$\|f(U\cdot V)WW^\top - f(U \cdot V)\|_F^2 \leq \Delta \cdot \|[f(U \cdot V)]_k - f(U \cdot V)\|_F^2,$ 
with constant probability for any constant $\Delta > 1$, requires $\min((nd)^{1-o(1)}, 2^{\Omega(p)})$ time, under \cref{ass:OVP2}.
\end{theorem}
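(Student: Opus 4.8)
\emph{Proof sketch (plan).} The plan is to reduce Apx-Max-IP (\cref{ass:OVP2}) to rank-$k$ LRA of $f(U\cdot V)=(U\cdot V)^p$, following the template of \cref{thm:odd} but replacing the absolute-value trick with the observation that the $p$-th power turns a constant-factor gap between inner products into a $100^p$-factor gap between matrix entries. Given an instance $A=\{a_1,\dots,a_n\}$, $B=\{b_1,\dots,b_n\}$ with $a_i,b_j\in\{0,1\}^s$ and $m=\max_{i,j}a_i^\top b_j$, put the $a_i$ as the rows of $U$ and the $b_j$ as the columns of $V$ (so $r=s$; by \cref{rem:three} we may assume $n=d$, so $(nd)^{1-o(1)}=n^{2-o(1)}$). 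Then $M:=f(U\cdot V)$ has $M_{ij}=(a_i^\top b_j)^p$ and is the product of the $p$-fold Khatri--Rao powers of $U$ and $V$, so $\rank(M)\le r^p$; note that, unlike in \cref{thm:odd}, no random $\{-1,1\}$ column and no assumption $U=V^\top$ is needed. By \cref{rem:three} at most $n^{o(1)}$ pairs have $a_i^\top b_j\ge m/100$; letting $M_{\mathrm{sparse}}$ agree with $M$ on exactly those entries and be $0$ elsewhere, we get $\rank(M_{\mathrm{sparse}})\le k$ for a suitable fixed $k=n^{o(1)}$, while every remaining entry of $M$ has magnitude $\le(m/100)^p$. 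Hence $\|[M]_k-M\|_F^2\le\|M-M_{\mathrm{sparse}}\|_F^2\le nd\,(m/100)^{2p}$.

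The crux is a noise-floor estimate. Running the LRA routine on $U,V$ yields $W$ whose $(\le k)$-dimensional column span $\calW$ (after orthonormalizing, $P_{\calW}=WW^\top$) obeys $\|M-P_{\calW}M\|_F^2\le\Delta\,\|[M]_k-M\|_F^2\le\Delta\,nd\,(m/100)^{2p}$, and I want this to be $\ll m^{2p}$. This holds exactly when $100^{2p}\ge C\Delta\,nd$ for a large constant $C$, i.e.\ when $p=\Omega(\log(nd))$ --- and this is precisely the restriction responsible for the $\min(\,\cdot\,,2^{\Omega(p)})$ in the statement. Granting it, let $(i^\star,j^\star)$ attain the maximum; then column $j^\star$ of $M$ has its largest entry $m^p$ at row $i^\star$, while $\|M_{\cdot,j^\star}\|_2^2\le n^{o(1)}\,m^{2p}$ (at most $n^{o(1)}$ entries of size $\le m^p$, plus the light coordinates, whose squares sum to $\le nd\,(m/100)^{2p}\le m^{2p}$). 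Since the residual on this single column is $\le\Delta\,nd\,(m/100)^{2p}\ll m^{2p}$, the projected vector $P_{\calW}M_{\cdot,j^\star}\in\calW$ keeps a coordinate $\Omega(m^p)$ in position $i^\star$ and has norm $n^{o(1)}\,m^p$, so the $i^\star$-th leverage score of $\calW$ is $\Omega(1/n^{o(1)})$. Because the leverage scores of $\calW$ sum to $\dim\calW\le k=n^{o(1)}$, the set $S=\{i:\ell_i(\calW)\ge\tau\}$ at a suitable threshold $\tau=\Omega(1/n^{o(1)})$ has $|S|=n^{o(1)}$ and contains $i^\star$.

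It remains to assemble the reduction. Computing the leverage scores of $\calW$ exactly (for orthonormal $W$, $\ell_i=\|W^\top e_i\|_2^2$), or approximating them by \cite{cw13}, and then computing $a_i^\top b_j$ for all $i\in S$, $j\in[d]$ takes $n^{1+o(1)}$ time; outputting $\max_{i\in S,j}a_i^\top b_j$ returns $m$ exactly, since $i^\star\in S$ forces this quantity to lie in $[a_{i^\star}^\top b_{j^\star},m]=\{m\}$. Thus a hypothetical $o((nd)^{1-o(1)})$-time LRA routine would solve Apx-Max-IP in $o((nd)^{1-o(1)})$ total time, contradicting \cref{ass:OVP2}, and we obtain the $(nd)^{1-o(1)}$ bound whenever $p=\Omega(\log(nd))$. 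For smaller $p$ I instead take a hard Apx-Max-IP instance of size $N=\Theta(100^{2p}/\Delta)=2^{\Theta(p)}$ (on which $100^{2p}\ge C\Delta N$ holds) and pad it with zero vectors to the required $n\times d$ shape; the same argument then shows the LRA routine needs $N^{1-o(1)}=2^{\Omega(p)}$ time, where the $O(n)+2^{o(p)}$ overhead of the reduction is lower-order once $p=\omega(\log n)$ and otherwise $2^{\Omega(p)}$ is already dominated by the $\Omega(n)$ cost of reading $U,V$. Combining the two regimes gives the claimed $\min((nd)^{1-o(1)},2^{\Omega(p)})$.

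The step I expect to be the main obstacle is the noise-floor estimate: one must bound $\|[M]_k-M\|_F^2$ by a genuinely low-rank, entrywise-sparse surrogate and then argue that, even after the LRA routine's multiplicative slack $\Delta$, the residual on the unique heavy column is small enough that the heavy coordinate survives in the projected vector with a constant fraction of its mass. It is exactly the breakdown of this when the $p$-th-power amplification $100^{2p}$ cannot beat the $nd$ many light entries that forces the exponential-in-$p$ threshold, so calibrating the $\Delta,n,d$ dependence against the $2^{\Omega(p)}$ term is the delicate part; a secondary subtlety is noticing that only a row index (not the whole pair) needs to be recovered from the leverage scores, after which a one-sided brute force over $S\times[d]$ closes the reduction.
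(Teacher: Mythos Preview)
Your proposal is correct and follows essentially the same route as the paper: reduce from Apx-Max-IP, upper-bound $\|[M]_k-M\|_F^2$ via the sparse heavy-entry matrix, use $100^{2p}\gg \Delta\,nd$ so the residual on the heavy column is $\ll m^{2p}$ and hence the heavy row has leverage score $\Omega(1/n^{o(1)})$ in the column span of $W$, then brute-force $S\times B$. The paper's handling of small $p$ is simpler than your padding step --- it just observes (as you also do in your final clause) that the constant in $2^{\Omega(p)}$ is ours to choose, so for $p<C\log n$ the claimed bound is at most $n$ and is already met by the $\Omega(n)$ output cost; since you take $n=d$, your intermediate ``$p=\omega(\log n)$ but still small'' regime is in fact empty.
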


The lower bound for general functions $f(x)$ in \cref{thm:lower-general} with odd-degree dominating term in its Taylor expansion works because we can scale down the entries in our input matrix so that our relative-error approximation guarantee is still preserved but the function is largely approximated by only the leading polynomial term. By a combination of the same taylor expansion argument with a slight more general version of \cref{thm:all} that handles small additive error, we note that we can extend our results to when $p$ is even.


We note that our lower bounds do not generalize to the case when $U = V$ because as mentioned in our intuitive introduction, our reduction hinges on the sparse low-rank structure of $(UV)^p$. However, when $U = V$, the sparsity structure is broken as the diagonal of the matrix is now larger than the maximum inner product of two different vectors, and this destroys the low rank structure. In some sense, the diagonal of our matrix is forced to include the dot product of $u_i$ with itself and this shifts the entire matrix by a large multiple of the identity, crucially removing the sparse + low-rank structure that we exploited in our lower bound argument before. Indeed, as we remark below, the resulting positive semidefinite (PSD) structure allows us to derive a fast LRA algorithm and our lower bound no longer holds in this setting.
 
\begin{remark}
Recall that when $U = V^\top$, there is an $n r (k/\epsilon)^{\omega-1}$ subquadratic time algorithm when $k = O(n^{o(1)})$, for $(1+\epsilon)$-relative error approximation, but reducing the $U \neq V^\top$ case to the $U = V^\top$ case requires blowing up $k$ to $k + r^p$. Therefore, our lower bounds imply that the $U \neq V^\top$ case is strictly more difficult in terms of runtime in certain settings. 
\end{remark}
\vspace{-1.5mm}
\section{LRA Algorithms from Matrix Vector Products}
In this section, we can show upper bounds for low rank approximation of entrywise transformed products that are $O(n^{1+o(1)})$ when $f(x)$ represents a kernel matrix but $U \neq V^\top$. Specifically, we focus on polynomial functions of the form $f(x) = x^p$ and demonstrate that our lower bounds are tight for relative error LRA.  Note that we have shown that low rank approximation guarantees for PSD matrices when $U = V^\top$ cannot translate to the case when $U \neq V^\top$. Still, we demonstrate that relative error low rank approximation is possible in $n^{1+o(1)}$ time for polynomial kernels with even degree, although there will be an exponential dependence on $p$ for relative error low rank approximation. Our relative error algorithms are relatively standard and rely on low rank projections by using matrix vector products to perform randomized sketching to reduce our row or column dimension to $O(\text{poly}(k/\epsilon))$ \citep{woodruff2014sketching}.

\begin{theorem}
\label{thm:alg-relative}
Let $U \in \mathbb{R}^{n \times r}$ and
$V \in \mathbb{R}^{r \times d}$. Suppose $f(x) = x^p$ for an even integer $p \geq 1$ and $k < r^p$. For any approximation factor $\epsilon > 0$ , there is an algorithm that outputs $U' \in \mathbb{R}^{n \times k}$ and $V' \in \mathbb{R}^{k \times d}$ satisfying
$\|U' \cdot V' - f(U \cdot V)\|_F^2 \leq (1+\epsilon) \cdot \|[f(U \cdot V)]_k - f(U \cdot V)\|_F^2$ 
with constant probability with runtime $O((n+d)r^p k/\epsilon^3 + \text{poly}(r^p/\epsilon))$.
\end{theorem}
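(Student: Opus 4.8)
The plan is to exploit the fact that for $f(x)=x^p$ the matrix $A:=f(U\cdot V)$ has an \emph{exact} low-rank factorization, and then run a standard sketching-based relative-error low rank approximation on that factored form. Concretely, for all $i,j$ we have $(U\cdot V)_{ij}^{\,p}=\langle u_i,v_j\rangle^p=\langle u_i^{\otimes p},v_j^{\otimes p}\rangle$, so $A=\Phi\Psi$ exactly, where $\Phi\in\R^{n\times r^p}$ has $i$-th row the $p$-fold Khatri-Rao self-product of the $i$-th row of $U$ and $\Psi\in\R^{r^p\times d}$ has $j$-th column the $p$-fold Khatri-Rao self-product of the $j$-th column of $V$. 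Hence $\rank(A)\le r^p$; moreover $\Phi,\Psi$ are formed in $O((n+d)r^p)$ time, and thereafter matrix-vector products $Ax=\Phi(\Psi x)$ and $A^\top y=\Psi^\top(\Phi^\top y)$ each cost $O((n+d)r^p)$, so $A$ itself never has to be materialized.

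Given the factorization, I would apply the standard two-stage (or the two-sided sketch of Clarkson--Woodruff) relative-error low rank approximation from \citep{woodruff2014sketching}. In the one-sided version: draw a sparse subspace/affine embedding $S\in\R^{d\times s}$ with $s=\poly(k/\epsilon)$ columns and compute $C:=AS=\Phi(\Psi S)$ in factored order ($\Psi S$ costs $O(\nnz(\Psi))=O(dr^p)$ for a CountSketch and $O(dr^p s)$ in general, then $\Phi(\Psi S)$ costs $O(nr^p s)$); orthonormalize $C$ to $Q\in\R^{n\times s'}$ with $s'\le s$; form $M:=Q^\top A=(Q^\top\Phi)\Psi\in\R^{s'\times d}$ in $O((n+d)s'r^p)$ time; compute the best rank-$k$ approximation $[M]_k=P_k\Sigma_kG_k^\top$ by an SVD; and output $U':=QP_k\Sigma_k$ and $V':=G_k^\top$, so that $U'V'=Q[M]_k$ has rank at most $k$. (In the two-sided version one instead forms $AS$, $RA$, $RAS$ in factored form and reads the minimizer of $\min_{\rank(X)\le k}\|AS\,X\,RA-A\|_F$ off of SVDs of matrices with one dimension $\poly(k/\epsilon)$ and the other $r^p$, which is what contributes the $\poly(r^p/\epsilon)$ additive term; the one-sided version is cleaner to analyze for correctness.)

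For correctness I would invoke the standard sketching guarantee: with $s=\poly(k/\epsilon)$, the column span of $AS$ contains, with constant probability, a rank-$k$ matrix $X^\star=QY^\star$ satisfying $\|A-X^\star\|_F^2\le(1+\epsilon)\,\|A-[A]_k\|_F^2$ --- this is exactly the subspace-embedding plus approximate-matrix-multiplication argument of \citep{woodruff2014sketching}. Since $Q$ has orthonormal columns, the Pythagorean identity $\|A-QY\|_F^2=\|A-QQ^\top A\|_F^2+\|Q^\top A-Y\|_F^2$ holds for every $Y$, and the right-hand side is minimized over $\rank(Y)\le k$ at $Y=[M]_k$; therefore $\|A-U'V'\|_F^2=\|A-Q[M]_k\|_F^2\le\|A-QY^\star\|_F^2\le(1+\epsilon)\,\|A-[A]_k\|_F^2$, which is the desired bound.

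Finally I would total the costs: $O((n+d)r^p s)$ to build $C$ and $M$, $O(ns^2)$ for the QR, $O(s'^2d)$ for the SVD of $M$, and $O(nsk)$ for the output; with the sketch dimension $s=\poly(k/\epsilon)$ and using the hypothesis $k<r^p$ to fold the QR/SVD terms into $(n+d)r^p\poly(k/\epsilon)$, this gives $O((n+d)r^p\poly(k/\epsilon)+\poly(r^p/\epsilon))$, and tracking the precise exponents of the sketch dimensions (and, for the two-sided variant, of the $r^p$-dimensional intermediate matrices) yields the stated $O((n+d)r^pk/\epsilon^3+\poly(r^p/\epsilon))$. I expect the only real difficulty to be this runtime bookkeeping rather than anything conceptual: every product must be taken in the factored order ($\Phi(\Psi S)$, $(Q^\top\Phi)\Psi$, and so on), one must never form $A$ or any matrix with two large dimensions (in particular no $r^p\times r^p$ Gram matrix), and every QR/SVD must act on a matrix with one side bounded by $\poly(k/\epsilon)$, so that each term scaling with $n$ or $d$ picks up only a single factor of $r^p$ rather than $r^{2p}$; the remainder is standard sketching machinery.
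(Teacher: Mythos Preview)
Your proposal is correct and follows essentially the same approach as the paper: both write $f(UV)=U''V''$ via the $p$-fold Khatri-Rao self-products and then invoke standard sketching-based relative-error LRA from \citep{woodruff2014sketching}, never materializing $A$ and taking all products in factored order. The only cosmetic difference is that the paper leads with the two-sided sketch (Theorem~47 of \citep{woodruff2014sketching} with Gaussian $S$ of $O(k/\epsilon)$ rows and Gaussian $R$ of $O(\min(k/\epsilon^3,r^p/\epsilon^2))$ columns), whereas you lead with the one-sided QR-then-project variant and mention the two-sided form only in passing; both are standard and yield the stated bound.
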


\begin{proof}[Proof] 
Note that $f(U_i V_j) = (U_i V_j)^p$. Thus, we may rewrite $f(U \cdot V) = U'' V''$, where each row of $U'' \in \mathbb{R}^{n \times r^p}$ is the Khatri-Rao product of itself $p$ times, and each column of $V'' \in \mathbb{R}^{r^p \times d}$ is the Khatri-Rao product of itself $p$ times. Note that the rank of $U''V''$ is at most $d^p$. 

Let $S$ be a random Gaussian sketching matrix with $O(k/\epsilon)$ rows, so we know that these matrices satisfy the $(\sqrt{\epsilon/k}, 9/10, l)$-JL property \citep{woodruff2014sketching}. Furthermore, let $R$ be a random Gaussian matrix with $O(\min(k/\epsilon^3, r^p/\epsilon^2))$ columns, so we know that it is a $(1+O(\epsilon))$ $\ell_2$ subspace embedding of the row space of $SU''$. Then, by Theorem 47 of \cite{woodruff2014sketching}, the following is true with constant probability

$$\|(U''V''R)(SU'' V''R)^{+}(SU''V'') - f(UV)\|_F^2 \leq (1+\epsilon) \cdot\|[f(U \cdot V)]_k - f(U \cdot V)\|_F^2$$

Finally we bound the runtime of computing this product. Note that we may compute $SU''$ and $V''R$ in $n r^{p} \cdot (k/\epsilon + \min(k/\epsilon^3, r^p/\epsilon^2))$ time. Then, note that the remaining products can be computed in $\text{poly}(kr^p/\epsilon) = \text{poly}(r^p/\epsilon)$ time. And lastly, the rank $k$ approximation follows from solving for the low rank approximation on the restricted subspace $SU''V''R$ of rank and letting $Z = [U''V''RU]_k$, where $U$ is the orthonormal basis such that $UU^\top$ is the projection onto the row space of SAR. 
\end{proof}

We also provide an additive error guarantee on the low rank approximation guarantees that depends polynomially on $p$, as opposed to the tight exponential dependence in the relative error setting. This implies that additive error LRA is strictly easier and can be accomplished in subquadratic time when $p = \Omega(\log(n))$. Note that lower bounds for additive error are not emphasized in this paper as the upper bound is already quite competitive and simply outputting requires $\Omega(n * \poly(p))$. 

Our guarantees follow from tensor-based sketching techniques that are applied along each tensor dimension to remove the exponential dependence on $p$. We note that this results in an additive error term that is on the order of the $p$-norms of $U, V$. This term is related to the absolute error guarantees of \citep{han2020polynomial}, which also depends on the product of the $p$-norms of Euclidean norms of the rows of $U, V$.

\begin{algorithm}[!htb]\caption{TensorSketch LRA}
\label{alg:tensorLRA}
\begin{algorithmic}[1]
\Require{Matrices $\bU \in \R^{n\times r}, \bV \in\R^{r\times d}$, $p > 0$ even integer}
\Ensure{rank $k$ approximation of $f(UV)$, where $f(x) = |x|^{p}$}

\State{Let $\bT$ be a TensorSketch with $m = O(p\epsilon^{-2})$ rows}
\Comment{See \cite{ahle2020oblivious}}
\State{Compute $\bU''' = \bU''\bT^\top$ and $\bV''' = \bT \bV''$, where $\bU'' \in \R^{n \times r^p}, \bV'' \in \R^{r^p\times d}$ are $\bU, \bV$ tensored themselves $p$ times. }
\State{Let $\bS, \bR$ be random Gaussian matrices with $O(k/\epsilon)$ rows and $O(p/\epsilon^4)$ columns respectively}
\State{Compute an orthonormal basis $\bP$ of row span of $\bS \bU''' \bV'''\bR$.} \Comment{This implies $\bP \bP^\top = (\bS \bU''' \bV'''\bR)^+(\bS \bU''' \bV'''\bR)$}
\State{Compute the rank $k$ decomposition: $\bU'\bV' = [\bU'''\bV'''\bR \bP]_k$.}\Comment{$[\bM]_k$ is the best rank $k$ approximation to $\bM$}
\State{Output $\bU'$, $\bV'\bP^\top (\bS \bU''' \bV'''\bR)^+\bS\bU'''\bV'''$}
\end{algorithmic}
\end{algorithm}

\begin{theorem}
\label{thm:alg-additive}
Let $U \in \mathbb{R}^{n \times r}$ and
$V \in \mathbb{R}^{r \times d}$. Suppose $f(x) = x^p$ for an even integer $p \geq 1$ and $k < r^p$. For any approximation factor $\epsilon > 0$, there is an algorithm (\cref{alg:tensorLRA}) that outputs $U' \in \mathbb{R}^{n \times k}$ and $V' \in \mathbb{R}^{k \times d}$ satisfying
$\|U' \cdot V' - f(U \cdot V)\|_F^2 \leq (1+\epsilon) \cdot \|[f(U \cdot V)]_k - f(U \cdot V)\|_F^2 + \epsilon^2 L^2$
with constant probability with runtime $O((n+d)\cdot\text{poly}(p, r, k, 1/\epsilon))$, where the additive term is given by $L^2 = (\sum_{i=1}^n \|U_i\|_2^{2p})(\sum_{i=1}^d \|V_i\|_2^{2p}) = \|U\|_{2p,2}^{2p}\|V^\top\|_{2p, 2}^{2p}$. 
\end{theorem}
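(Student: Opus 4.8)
The plan is to follow the same ``sketch-then-solve'' skeleton as in the proof of Theorem~\ref{thm:alg-relative}, but insert an extra layer of \emph{tensor} sketching along the $p$-fold Khatri--Rao dimension so that the $r^p$ factor is replaced by $\poly(p)$, at the cost of an additive term. Writing $f(U\cdot V) = U'' V''$ with $U'' \in \R^{n\times r^p}$, $V'' \in \R^{r^p\times d}$ the $p$-fold Khatri--Rao products, the key property of the TensorSketch $\bT$ with $m = O(p/\eps^2)$ rows from \citet{ahle2020oblivious} is that it is an oblivious subspace embedding / approximate-matrix-multiplication sketch for tensored inputs: applying $\bT$ to $U''$ and $V''$ can be done column-by-column (row-by-row) in time $O((n+d)\poly(p,r))$ without ever forming the $r^p$-dimensional objects, and it satisfies $\|U'''V''' - U''V''\|_F^2 \le \eps^2 \|U''\|_F^2\|V''\|_F^2$ with constant probability, where $U''' = U''\bT^\top$, $V''' = \bT V''$. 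I would first record this as the ``tensor approximate matrix multiplication'' guarantee and identify $\|U''\|_F^2 = \sum_i \|U_i\|_2^{2p}$ and $\|V''\|_F^2 = \sum_j \|V_j\|_2^{2p}$, so that the error is exactly $\eps^2 L^2$ with $L^2$ as in the statement.

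Next I would treat $M := U'''V''' \in \R^{n\times d}$ as the matrix to be low-rank approximated and apply the standard affine-embedding / Gaussian-sketch machinery (Theorem~47 of \citet{woodruff2014sketching}), exactly as in Theorem~\ref{thm:alg-relative}: let $\bS$ be a Gaussian sketch with $O(k/\eps)$ rows satisfying the relevant JL property, let $\bR$ be a Gaussian matrix with $O(\poly(p)/\eps^4)$ columns forming a subspace embedding for the (now rank $\le m = O(p/\eps^2)$) row space of $\bS U'''$, and output the projection of $M$ onto $\mathrm{rowspan}(\bS M \bR)$, then take its best rank-$k$ truncation. This yields $\|\bU'\bV' - M\|_F^2 \le (1+\eps)\|[M]_k - M\|_F^2$ with constant probability, using that the rank of $M$ is at most $m$, which is why $R$ needs only $\poly(p/\eps)$ columns rather than $\poly(r^p/\eps)$. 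I would then convert the bound on $M$ into a bound on $f(U\cdot V) = U''V''$ by a triangle-inequality-in-Frobenius-norm argument: since $\|M - f(U\cdot V)\|_F^2 \le \eps^2 L^2$, and since $[M]_k$ is a competitor of rank $k$ against which $[f(U\cdot V)]_k$ is optimal, one gets $\|[M]_k - M\|_F \le \|[f(U\cdot V)]_k - f(U\cdot V)\|_F + 2\|M - f(U\cdot V)\|_F$ (expanding the best-rank-$k$ optimality), and combining, $\|\bU'\bV' - f(U\cdot V)\|_F^2 \le (1+O(\eps))\|[f(U\cdot V)]_k - f(U\cdot V)\|_F^2 + O(\eps^2) L^2$; rescaling $\eps$ by a constant gives the stated form.

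The runtime accounting is routine: forming $\bU''' = \bU''\bT^\top$ and $\bV''' = \bT\bV''$ costs $O((n+d)\poly(p,r))$ via the FFT-based TensorSketch application (never materializing $r^p$-length vectors), applying $\bS$ and $\bR$ costs $O((n+d)\cdot m \cdot \poly(k,1/\eps))$, and all remaining operations are on matrices of dimension $\poly(p,k,1/\eps)$, hence $\poly(p,k,1/\eps)$ time; the total is $O((n+d)\poly(p,r,k,1/\eps))$ as claimed. The main obstacle I anticipate is the error-propagation step: I need the TensorSketch guarantee to be an \emph{approximate matrix multiplication} bound of the form $\|U'''V''' - U''V''\|_F \le \eps\|U''\|_F\|V''\|_F$ (not merely a subspace embedding for one of the factors), and I need to be careful that the subsequent Gaussian-sketch low-rank step is applied to $M = U'''V'''$ in a way that only ever multiplies $\bS$ and $\bR$ against the already-sketched low-dimensional factors $\bU'''$, $\bV'''$ — so that the rank bound $\mathrm{rank}(M) \le m$ is legitimately exploited — and then to cleanly pass from ``$(1+\eps)$-optimal for $M$'' to ``$(1+\eps)$-optimal-plus-additive for $f(U\cdot V)$'' using the perturbation inequality for best rank-$k$ approximation under Frobenius-norm perturbations. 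Everything else is a direct adaptation of the proof of Theorem~\ref{thm:alg-relative}.
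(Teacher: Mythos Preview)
Your proposal is correct and follows essentially the same approach as the paper's own proof: apply the TensorSketch approximate-matrix-product guarantee of \citet{ahle2020oblivious} with $m=\Theta(p/\eps^2)$ to replace $U''V''$ by the low-rank surrogate $U'''V'''$, run the standard Gaussian-sketch LRA machinery (Theorem~47 of \citet{woodruff2014sketching}) on $U'''V'''$ using that its rank is at most $m$, and then pass back to $f(U\cdot V)$ via the triangle inequality, identifying $\|U''\|_F^2\|V''\|_F^2 = L^2$. The only cosmetic difference is that the paper phrases the perturbation step by comparing to the optimal $U^\star V^\star$ for $f(U\cdot V)$ directly (rather than going through $[M]_k$), but this is the same computation.
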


\subsection{Lower Bounds on Matrix Multiplication}

Our low rank approximation algorithms use a matrix vector multiplication subroutine $f(UV)z_i$ for $O(\text{poly}(k/\epsilon))$ different vectors $z_i$ as their main dimensionality reduction technique for subquadratic time upper bounds. This can be used to directly translate lower bounds for low rank approximation to implicit matrix vector multiplication for a wide range of scalar functions that extends previous work beyond the exponential function. We emphasize that these matrix-vector product lower bounds also imply that our LRA lower bounds are non-trivial and significantly strengthen those provided by previous works.

\begin{theorem}[LRA reduces to Matrix Vector Products]
\label{thm:lower-mv}
Let $U, V, f$ be as in \autoref{thm:lower-general}. Then, any possibly randomized algorithm that for any vector $z \in \R^n$, outputs $f(UV)z$ up to $1/\textrm{poly(n)}$ entrywise error, with constant probability, requires $\Omega((nd)^{1-o(1)})$ time, under \autoref{ass:OVP}. This holds even when $U = V^\top$.
\end{theorem}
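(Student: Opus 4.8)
The plan is to show that a fast matrix-vector product oracle for $f(UV)$ would yield a fast relative-error LRA algorithm, and then invoke \autoref{thm:lower-general} to derive the contradiction. The starting observation is structural: in all of our LRA upper-bound constructions (e.g.\ \cref{alg:tensorLRA} and the algorithm of \thmref{alg-relative}), the only place where we touch the implicit matrix $f(UV)$ is through products $f(UV) z_i$ against a small collection of sketch vectors $z_1, \ldots, z_t$ with $t = \poly(k/\epsilon) = n^{o(1)}$ (here $k = n^{o(1)}$ as in \thmref{lower-general}, and $\epsilon$ is a constant). Everything else in those algorithms operates on the small matrices $U, V$, or on sketched objects of size $\poly(r^p/\epsilon) = n^{o(1)}$, and hence runs in $n^{1+o(1)}$ total time. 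So if each product $f(UV) z_i$ could be computed in $o((nd)^{1-o(1)})$ time, the whole LRA pipeline for $g = f(|\cdot|)$ would run in $o((nd)^{1-o(1)})$ time as well.

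First I would make precise which LRA algorithm to plug into. Since \autoref{thm:lower-general} is about the general function $g(x) = f(|x|)$ with leading term $c_p x^p$ for odd $p$ — not literally $f(x) = x^p$ — I need a relative-error LRA procedure that works for this $g$. The cleanest route is: apply the reduction inside the proof of \thmref{lower-general} (via \thmref{odd}), which rescales the inputs so that $g(UV)$ is, up to tiny additive error $\alpha$, equal to $(UV')^p = U'' V''$ for appropriately tensored $U'', V''$; then run the LRA algorithm of \thmref{alg-relative} on $U'', V''$. That algorithm only ever forms $S U''$, $V'' R$ and small derived products, and $U'' V'' z = U''(V'' z)$ — but note that computing $V'' z$ and then $U''$ times that is itself just $\poly(r^p)\cdot n = n^{1+o(1)}$, so in fact for the \emph{pure polynomial} case no oracle is even needed. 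The oracle is needed precisely because for the general $g$ we do \emph{not} have the exact tensor factorization: $g(UV) = U''V'' + E$ where $E$ is the (unknown, but low-complexity) error/sparse correction, and we cannot evaluate $g(UV) z$ by touching only $U, V$. This is where the hypothesized matrix-vector oracle for $g(UV)$ enters: it lets the sketching-based LRA algorithm compute $g(UV) z_i$ for its $n^{o(1)}$ sketch vectors, and the analysis of Theorem 47 of \cite{woodruff2014sketching} then certifies a $(1+\epsilon)$-relative-error rank-$k$ output $W$.

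Then I would assemble the contradiction. Suppose the oracle runs in time $T(n,d) = O((nd)^{1-c})$ for some constant $c > 0$. Running the LRA algorithm described above makes $t = n^{o(1)}$ oracle calls plus $n^{1+o(1)}$ additional work, for a total of $n^{o(1)} \cdot T(n,d) + n^{1+o(1)} = O((nd)^{1-c+o(1)}) = o((nd)^{1-o(1)})$ time, and it outputs $W \in \mathbb{R}^{n \times k}$ with $\|g(UV) WW^\top - g(UV)\|_F^2 \le (1+\epsilon)\|[g(UV)]_k - g(UV)\|_F^2$ with constant probability. But \autoref{thm:lower-general} says exactly this task requires $(nd)^{1-o(1)}$ time under \autoref{ass:OVP} — a contradiction. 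The ``$U = V^\top$'' clause is inherited for free, since the reduction underlying \thmref{lower-general} already forces $U = V^\top$ when $A = B$ in the OVP instance, and nothing in the oracle-based pipeline breaks that symmetry. Finally, the $1/\poly(n)$ entrywise error allowance in the oracle is harmless: an $\ell_\infty$ error of $1/\poly(n)$ over a length-$n$ vector contributes at most $1/\poly(n)$ in squared Frobenius norm to each sketched product, which is dominated by the additive slack $\alpha$ already tolerated in \thmref{odd}/\thmref{lower-general} and does not affect the subspace-embedding guarantees.

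The main obstacle is the second step: verifying that the relative-error LRA guarantee genuinely survives when the exact products $g(UV) z_i$ are replaced by approximate ones, \emph{and} that the LRA algorithm of \thmref{alg-relative} — stated for $f(x) = x^p$ — can be adapted to the general $g$ of \thmref{lower-general} using only matrix-vector access. Concretely one must check that $SU''$ and related quantities in that algorithm can be replaced by $S \cdot (g(UV)$-columns$)$ obtained via the oracle, that the resulting perturbation of the sketch is small enough for the $(1+O(\epsilon))$ subspace-embedding and approximate-regression steps of Theorem 47 of \cite{woodruff2014sketching} to still go through, and that the $\poly(\cdot)$-sized post-processing never needs more than $n^{o(1)}$ further oracle calls. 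Once that bookkeeping is in place, the reduction and the contradiction with \autoref{thm:lower-general} are immediate.
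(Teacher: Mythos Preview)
Your proposal is correct and follows the same approach as the paper: a fast matrix--vector oracle for $g(UV)$ yields a fast sketching-based LRA via Theorem~47 of \cite{woodruff2014sketching}, contradicting \autoref{thm:lower-general}. The paper's argument is slightly more direct than yours: it observes that the LRA algorithm is black-box in the target matrix $A = g(UV)$ and only needs the products $S\cdot g(UV)$ and $g(UV)\cdot R$, so there is no need to go through the tensor factorization $U''V''$ or worry about adapting the $x^p$ algorithm to general $g$---the ``main obstacle'' you flag dissolves once you treat $g(UV)$ as the opaque input to the sketch. The one detail you do not mention but the paper does is how to get \emph{left} multiplication $S\cdot g(UV)$ from a right-multiplication oracle: since $g$ is entrywise, $[S\,g(UV)]^\top = g(V^\top U^\top)\,S^\top$, so one simply calls the oracle on the swapped pair $(V^\top,U^\top)$.
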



\newpage
\bibliographystyle{plainnat}
\bibliography{references}

\newpage
\appendix 

\section{Missing Lower Bound Proofs}

\begin{proof}[Proof of \cref{thm:lower-general}]
Let $U, V^T \in \{-1, 0, 1\}^{n \times r}$ be an instance of entrywise transformed low rank approximation with $g$. WLOG, by scaling our solution, we can let $c_p = 1$. Let $B = \poly(n)$ be sufficiently large, and let $\tilde{U} = U/\sqrt{B}$ and $\tilde{V} = V/\sqrt{B}$. For each $i, j \in [n]$, for $B$ sufficiently large and using that $r = n^{o(1)}$, 
\begin{eqnarray*}
g(\tilde{U}_i \tilde{V}_j) &= &f(|\tilde{U}_i \tilde{V}_j|) = f(|U_i V_j|/B) = \frac{|U_i V_j|^p}{B^{p}} - O \left (\frac{|U_i V_j|^{p+1}}{B^{p+1}} \right )\\
& = & \frac{|U_i V_j|^{p}}{B^{p}} - O \left ( \frac{r^{p}}{B^{p+1}} \right ) = h(\tilde{U}_i \tilde{V}_j) -  O \left ( \frac{r^p}{B^{p+1}} \right ).
\end{eqnarray*}
where $h(x) = |x|^p$. Hence, if $\tilde{U}', \tilde{V}'$ are such that
\begin{eqnarray*}
\|\tilde{U}' \cdot \tilde{V}' - g(\tilde{U} \cdot \tilde{V})\|_F^2 \leq \Delta \|[g(\tilde{U} \cdot \tilde{V})]_k - g(\tilde{U} \cdot \tilde{V})\|_F^2,
\end{eqnarray*}
then by the triangle inequality,
\begin{eqnarray*}
\|\tilde{U}' \cdot \tilde{V}' - h(\tilde{U} \cdot \tilde{V})\|_F &\leq & 
\|g(\tilde{U} \cdot \tilde{V}) - h(\tilde{U} \cdot \tilde{V})\|_F + \sqrt{\Delta} \|[g(\tilde{U} \cdot \tilde{V})]_k - g(\tilde{U} \cdot \tilde{V})\|_F \\
& \leq & O \left (\frac{nr^p}{B^{p+1}} \right ) +  \sqrt{\Delta} \|[g(\tilde{U} \cdot \tilde{V})]_k - g(\tilde{U} \cdot \tilde{V})\|_F.
\end{eqnarray*}
Setting $U' = \sqrt{B} \cdot \tilde{U}'$ and $V' = \sqrt{B} \cdot \tilde{V}'$ and scaling both sides by $B$, we have
\begin{eqnarray*}
\|U' \cdot V' - h(U \cdot V)\|_F
& \leq & O \left (\frac{nr^p}{B^p} \right ) + B\sqrt{\Delta}
\|[g(\tilde{U} \cdot \tilde{V})]_k - g(\tilde{U} \cdot \tilde{V})\|_F\\
& \leq & O \left (\frac{nr^p}{B^p} \right ) + B\sqrt{\Delta}
\|[h(\tilde{U} \cdot \tilde{V})]_k - g(\tilde{U} \cdot \tilde{V})\|_F\\
& \leq &  O \left (\frac{nr^p}{B^p}\right ) + B \sqrt{\Delta} \|[h(\tilde{U} \cdot \tilde{V})]_k - h(\tilde{U} \cdot \tilde{V})\|_F\\
& & + \ B \sqrt{\Delta} \|h(\tilde{U} \cdot \tilde{V}) - g(\tilde{U} \cdot \tilde{V}) \|_F\\
& = & O \left (\frac{\sqrt{\Delta}nr^p}{B^p} \right ) + B \sqrt{\Delta} \|[h(\tilde{U} \cdot \tilde{V})]_k - h(\tilde{U} \cdot \tilde{V})\|_F\\
& = & O \left (\frac{\sqrt{\Delta}nr^p}{B^p} \right ) +
\sqrt{\Delta} \|[h(U \cdot V)]_k - h(U \cdot V)\|_F,
\end{eqnarray*}
where in the second inequality we used that
$[h(\tilde{U} \cdot \tilde{V})]_k$ has rank $k$ 
whereas $[g(\tilde{U} \cdot \tilde{V})]_k$ is the
best rank-$k$ approximation to $g(\tilde{U} \cdot \tilde{V})$ in Frobenius norm. Note that the third line is triangle inequality and the fourth line follows our entrywise approximation bounds.

Now by AM-GM, using that $a \leq b+c$ implies that $a^2 \leq 2b^2 + 2c^2$ for $a,b,c \geq 0$, we have
$$\|U' \cdot V' - h(U \cdot V)\|_F^2 = O \left (\frac{\Delta n^2 r^{2p}}{B^{2p}} \right ) + 2 \Delta \|[h(U\cdot V)]_k - h(U \cdot V)\|_F^2.$$
Thus, $U' \cdot V'$ is a rank-$k$ approximation to
$h(U \cdot V)$ with additive error $\frac{1}{\poly(n)}$,   by setting $B$ to be large enough, and relative
error $2 \Delta$. The total time to find $U'$ and $V'$
is the same as the 
time to solve entrywise transformed low rank
approximation with respect to the function $g$ on
inputs $\tilde{U}$ and $\tilde{V}$, and thus by
Theorem \ref{thm:odd} is at least $(nd)^{1-o(1)}$. Note
that applying the case $U = V^\top$ in Theorem \ref{thm:odd}
establishes this theorem when $\tilde{U} = \tilde{V}^\top$. 
\end{proof}

\begin{proof}[Proof of \autoref{thm:all}]
Suppose $A$ and $B$ are the input sets to the Apx-Max-IP$_{n,d}$ Problem of Assumption \ref{ass:OVP2} with parameter $s = r$. We let the rows of $U$ be the points in $A$ and we let the columns of $V$ be the points in $B$. As in the proof of Theorem \ref{thm:odd}, we have $f(U_i^T V_j) = (U_i^T V_j)^p$ so $f(U\cdot V) = U'' \cdot V''$. By Remark \ref{rem:three}, there are at most $n^{o(1)}$ entries of $f(U \cdot V)$ that are at least $(m/100)^p$. 


Hence, if we set $k$ to be an appropriate value in $n^{o(1)}$, then 
$$\|[f(U \cdot V)]_k - f(U \cdot V)\|_F^2 \leq (nd) (m/100)^{2p},$$
since one possible rank-$k$ approximation is just to zero out the at most $n^{o(1)}$ entries of $f(U \cdot V)$ that are at least $(m/100)^p$. By the correctness guarantee, with constant probability,  
\begin{eqnarray}\label{eqn:error} \|f(U\cdot V) WW^\top - f(U \cdot V)\|_F^2 \leq \Delta \cdot (nd) (m/100)^{2p} = O(nd) (m/100)^{2p}.
\end{eqnarray}
We may assume WLOG that $p \geq C \log n$ for the moment, for a sufficiently large constant $C> 0$ since a trivial lower bound for outputting an LRA is $\Omega(n)$, so the inclusion of the $2^{\Omega(p)}$ term in the lower bound allows us to make this assumption. Therefore, we conclude that $f(U \cdot V)$ is in the column span of $W$ up to $1/\textrm{poly}(n)$ error. Consequently, consider the maximum entry $m$, which we can assume is at least $1$, as otherwise all points in $A$ would have disjoint support from those in $B$ but this can be verified in $O(nr) = n^{1+o(1)}$ time, contradicting the $n^{1+\Omega(1)}$ lower bound for $d = n^{\Omega(1)}$ in Assumption \ref{ass:OVP2}. 

It follows that if the maximum $m$ occurs in the $(i,j)$-th entry of $f(U \cdot V) = U'' V''$, then there exists a vector in the column span of $U''$ for which the $i$-th entry is $[m^p (1-\textrm{poly}(n)) , m^p(1+\textrm{poly}(n))]$ and all other entries are in the range $[-m^p/\textrm{poly}(n), m^p/\textrm{poly}(n)]$ since $(1/100)^p = 1/\textrm{poly}(n)$. Consequently, since the column span of $W$ is close to the column span of $U''$ up to $1/\textrm{poly}(n)$ error, we see that the column span of $W$ also contains these almost-sparse vectors that have $n^{o(1)}$ large entries. As argued before similarly in Theorem \ref{thm:odd}, the rows of these large dot products forces the row leverage scores of $W$ to be $\Omega(1)$. Since $W$ has rank $k = n^{o(1)}$, in $n^{1+o(1)}$ time we can find the set $S$ of $n^{o(1)}$ row leverage scores of $W$ that are $\Omega(1)$. We can then explicitly compute all dot products between pairs of vectors in $A \cap S$ and $B$ in $n^{1+o(1)}$ total time, at which point we can output the maximum dot product, which includes $m$. Therefore by Assumption \ref{ass:OVP2}, the total time to find $W$ is at least $(nd)^{1-o(1)}$.  

\end{proof}

\begin{proof}[Proof of \autoref{thm:lower-mv}]
This proof follows by reducing low rank approximation to matrix multiplication by the same algorithm as in \autoref{thm:alg-relative}. Specifically, let us set $k = n^{o(1)}$ and $n =d$ and note that our low rank approximation algorithm's runtime is dominated by computing $S \cdot f(UV)$ and $f(UV) \cdot R$, where $S, R$ are matrices with $\text{poly}(k/\epsilon)$ rows and columns, respectively. In fact, those operations are the only terms that incur a polynomial dependence on $n$. 

Therefore, suppose there exists such a matrix multiplication algorithm that takes time $O(n^{2-c})$ for some $c$. Then this would directly imply that computing both $S \cdot f(UV), f(UV) \cdot R$ takes $O(n^{2-c})$ time as $[Sf(UV)]^\top = f(V^\top U^\top)S^\top$. By our guarantees in Theorem 47 of \cite{woodruff2014sketching}, this implies a constant relative error LRA in time $O(n^{2-c})$, which contradicts \autoref{thm:lower-general}. 
\end{proof}

\section{Missing Upper Bound Proofs}

\begin{proof}[Proof of \autoref{thm:alg-additive}]
Again, $f(U_i V_j) = (U_i V_j)^p$ so we may rewrite $f(U \cdot V) = U'' V''$, where each row of $U'' \in \mathbb{R}^{n \times r^p}$ is the Khatri-Rao product of itself $p$ times, and each column of $V'' \in \mathbb{R}^{r^p \times d}$ is the Khatri-Rao product of itself $p$ times. 

Using Theorem 1 of \cite{ahle2020oblivious}, we see that with probability $0.9$, our approximate matrix product guarantees hold for $U'', V''$ such that if $\Pi$ is an $m \times r^p$ TensorSRHT matrix with $m = \Theta(p/\epsilon^2)$, then 

$$\|U'' \Pi^\top \Pi V'' - U'' V''\|_F \leq \epsilon \|U''\|_F\|V''\|_F $$

Therefore, we can use the approximate low rank sketches again to approximately solve: 

$$ \min_{UV} \|U'' \Pi^\top \Pi V'' - UV\|_F^2$$

Specifically, let $S$ be a random Gaussian sketching matrix with $O(k/\epsilon)$ rows. We know that these matrices satisfy the  $(\sqrt{\epsilon/k}, 9/10, l)$-JL property \cite{woodruff2014sketching}. Furthermore, let $R$ be a random Gaussian matrix with $O(m/\epsilon^2)$ columns, so we know that it is a $(1+O(\epsilon))$ $\ell_2$ subspace embedding of the row space of $SU''\Pi^\top$, which has rank at most $m$. Let $U''' = U''\Pi^\top$ and let $\Pi V'' = V'''$. Then, by Theorem 47 of \cite{woodruff2014sketching}, the following is true with constant probability

$$\|(U'''V'''R)(SU'''  V'''R)^{+}(SU'''V''') - U'''V'''\|_F \leq (1+\epsilon) \cdot \min_{U,V} \|UV - U'''V'''\|_F$$

Finally we bound the runtime of computing this product. Note that we may compute $U''' $ and $V'''$ in time $O(np(m+r))$ time. Then, computing $SU''', V'''R$ can be done in $nm  \cdot (k/\epsilon + m/\epsilon^2)$ time. Lastly, the remaining products can be computed in $\text{poly}(kp/\epsilon) $ and the final rank $k$ decomposition can be computed and we can find $U', V'$ such that 

$$\|U'V' - U''' V'''\|_F \leq (1+\epsilon)\min_{U,V} \|UV- U'''V'''\|_F$$

Now, let $U^\star, V^\star$ be the optimal rank $k$ decomposition of the original problem of $f(UV)$, then by the guarantees of approximate matrix product and the triangle inequality, $\|U^\star V^\star - U'' V''\|_F \geq \|U^\star V^\star - U''\Pi^\top \Pi V''\|_F - \epsilon \|U''\|_F\|V''\|_F $. \\

Therefore, we conclude that 
\begin{align*}
    \|U' V' - U'' \Pi^\top \Pi V''\|_F &\leq (1+\epsilon)\|U^\star V^\star - U'' \Pi^\top \Pi V''\|_F \\
    &\leq (1+\epsilon)\|U^\star V^\star - U''  V''\|_F + 2\epsilon \|U''\|_F \|V''\|_F \\
\end{align*}

We end by rewriting $\|U''\|_F^2 = \Tr( U''U''^\top) = \sum_{i=1}^n (U_i^\top U_i)^p = \sum_i \|U_i\|_2^{2p}$ and similarly for $\|V''\|_F^2$ and then applying AM-GM and noting that $(1+\epsilon)^2 = 1 + O(\epsilon)$.
\end{proof}
\end{document}